\documentclass[prl,notitlepage,twocolumn,nofootinbib,superscriptaddress]{revtex4-1}
\usepackage{times}

\usepackage[dvipsnames]{xcolor}
\usepackage{framed}
\definecolor{shadecolor}{rgb}{0.9,0.9,0.9}

\usepackage{mathtools}
\usepackage{amsmath}
\usepackage[shortlabels]{enumitem}

\usepackage{graphicx,epic,eepic,epsfig,amsmath,latexsym,amssymb,verbatim,color}
 
\usepackage{amsfonts}       
\usepackage{nicefrac}       

\usepackage{amsmath}
\usepackage{bbm}

\usepackage{float}
\usepackage{tikz}
\usetikzlibrary{chains}
\usetikzlibrary{fit}
\usepackage{pgflibraryarrows}		
\usepackage{pgflibrarysnakes}		

\usepackage{epsfig}
\usetikzlibrary{shapes.symbols,patterns} 
\usepackage{pgfplots}

\usepackage[strict]{changepage}
\usepackage{hyperref}
\hypersetup{colorlinks=true,citecolor=blue,linkcolor=blue,filecolor=blue,urlcolor=blue,breaklinks=true}

\usepackage[marginal]{footmisc}
\usepackage{url}
\usepackage{theorem}

\newtheorem{definition}{Definition}
\newtheorem{proposition}{Proposition}
\newtheorem{lemma}[proposition]{Lemma}

\newtheorem{theorem}[proposition]{Theorem}

\newtheorem{corollary}[proposition]{Corollary}


\def\squareforqed{\hbox{\rlap{$\sqcap$}$\sqcup$}}
\def\qed{\ifmmode\squareforqed\else{\unskip\nobreak\hfil
\penalty50\hskip1em\null\nobreak\hfil\squareforqed
\parfillskip=0pt\finalhyphendemerits=0\endgraf}\fi}
\def\endenv{\ifmmode\;\else{\unskip\nobreak\hfil
\penalty50\hskip1em\null\nobreak\hfil\;
\parfillskip=0pt\finalhyphendemerits=0\endgraf}\fi}
\newenvironment{proof}{\noindent \textbf{{Proof~} }}{\hfill $\blacksquare$}

\newcounter{remark}

\newcounter{example}

\mathchardef\ordinarycolon\mathcode`\:
\mathcode`\:=\string"8000
\def\vcentcolon{\mathrel{\mathop\ordinarycolon}}
\begingroup \catcode`\:=\active
  \lowercase{\endgroup
  \let :\vcentcolon
  }

\usepackage{cleveref}
\usepackage{graphicx}
\usepackage{xcolor}

\RequirePackage[framemethod=default]{mdframed}
\newmdenv[skipabove=7pt,
skipbelow=7pt,
backgroundcolor=darkblue!15,
innerleftmargin=5pt,
innerrightmargin=5pt,
innertopmargin=5pt,
leftmargin=0cm,
rightmargin=0cm,
innerbottommargin=5pt,
linewidth=1pt]{tBox}

\newmdenv[skipabove=7pt,
skipbelow=7pt,
backgroundcolor=red!15,
innerleftmargin=5pt,
innerrightmargin=5pt,
innertopmargin=5pt,
leftmargin=0cm,
rightmargin=0cm,
innerbottommargin=5pt,
linewidth=1pt]{rBox}

\newmdenv[skipabove=7pt,
skipbelow=7pt,
backgroundcolor=blue2!25,
innerleftmargin=5pt,
innerrightmargin=5pt,
innertopmargin=5pt,
leftmargin=0cm,
rightmargin=0cm,
innerbottommargin=5pt,
linewidth=1pt]{dBox}
\newmdenv[skipabove=7pt,
skipbelow=7pt,
backgroundcolor=darkkblue!15,
innerleftmargin=5pt,
innerrightmargin=5pt,
innertopmargin=5pt,
leftmargin=0cm,
rightmargin=0cm,
innerbottommargin=5pt,
linewidth=1pt]{sBox}
\definecolor{darkblue}{RGB}{0,76,156}
\definecolor{darkkblue}{RGB}{0,0,153}
\definecolor{blue2}{RGB}{102,178,255}
\definecolor{darkred}{RGB}{195,0,0}

\newcommand{\nc}{\newcommand}
\nc{\rnc}{\renewcommand}
\nc{\lbar}[1]{\overline{#1}}
\nc{\bra}[1]{\langle#1|}
\nc{\ket}[1]{|#1\rangle}
\nc{\ketbra}[2]{|#1\rangle\!\langle#2|}
\nc{\braket}[2]{\langle#1|#2\rangle}
\nc{\dketbra}[2]{\vert #1 \rangle \hspace{-.8mm} \rangle \hspace{-.4mm} \langle\hspace{-.8mm}\langle #2 \vert}
\nc{\dbra}[1]{\langle\hspace{-.8mm}\langle #1\vert}
\nc{\dket}[1]{\vert#1\rangle\hspace{-.8mm}\rangle}

\nc{\proj}[1]{| #1\rangle\!\langle #1 |}
\nc{\avg}[1]{\langle#1\rangle}
\nc{\smfrac}[2]{\mbox{$\frac{#1}{#2}$}}
\nc{\tr}{\operatorname{Tr}}
\nc{\ox}{\otimes}
\nc{\dg}{\dagger}
\nc{\dn}{\downarrow}
\nc{\cA}{{\cal A}}
\nc{\cB}{{\cal B}}
\nc{\cC}{{\cal C}}
\nc{\cD}{{\cal D}}
\nc{\cE}{{\cal E}}
\nc{\cF}{{\cal F}}
\nc{\cG}{{\cal G}}
\nc{\cH}{{\cal H}}
\nc{\cI}{{\cal I}}
\nc{\cJ}{{\cal J}}
\nc{\cK}{{\cal K}}
\nc{\cL}{{\cal L}}
\nc{\cM}{{\cal M}}
\nc{\cN}{{\cal N}}
\nc{\cO}{{\cal O}}
\nc{\cP}{{\cal P}}
\nc{\cQ}{{\cal Q}}
\nc{\cR}{{\cal R}}
\nc{\cS}{{\cal S}}
\nc{\cT}{{\cal T}}
\nc{\cU}{{\cal U}}
\nc{\cV}{{\cal V}}
\nc{\cX}{{\cal X}}
\nc{\cY}{{\cal Y}}
\nc{\cZ}{{\cal Z}}
\nc{\cW}{{\cal W}}

\nc{\bx}{\mathbf{x}}
\nc{\by}{\mathbf{y}}
\nc{\bz}{\mathbf{z}}

\nc{\csupp}{{\operatorname{csupp}}}
\nc{\qsupp}{{\operatorname{qsupp}}}
\nc{\var}{{\operatorname{var}}}
\nc{\rar}{\rightarrow}
\nc{\lrar}{\longrightarrow}
\nc{\polylog}{{\operatorname{polylog}}}
\nc{\wt}{{\operatorname{wt}}}
\nc{\av}[1]{{\left\langle {#1} \right\rangle}}
\nc{\supp}{{\operatorname{supp}}}

\nc{\argmin}{{\operatorname{argmin}}}
\nc{\bu}{\mathbf{u}}
\nc{\bv}{\mathbf{v}}
\nc{\eqt}[1]{\stackrel{\mathclap{\scriptsize \mbox{#1}}}{=}}
\nc{\leqt}[1]{\stackrel{\mathclap{\scriptsize \mbox{#1}}}{\leq}}
\nc{\geqt}[1]{\stackrel{\mathclap{\scriptsize \mbox{#1}}}{\geq}}

\def\x{\xi}

\nc{\RR}{{{\mathbb R}}}
\nc{\CC}{{{\mathbb C}}}
\nc{\FF}{{{\mathbb F}}}
\nc{\NN}{{{\mathbb N}}}
\nc{\ZZ}{{{\mathbb Z}}}
\nc{\PP}{{{\mathbb P}}}
\nc{\QQ}{{{\mathbb Q}}}
\nc{\EE}{{{\mathbb E}}}
\nc{\MM}{{{\mathbb M}}}
\nc{\id}{{\operatorname{id}}}

\nc{\CHSH}{{\operatorname{CHSH}}}

\nc{\be}{\begin{equation}}
\nc{\ee}{{\end{equation}}}
\nc{\bea}{\begin{eqnarray}}
\nc{\eea}{\end{eqnarray}}
\nc{\<}{\langle}
\rnc{\>}{\rangle}
\nc{\rU}{\mbox{U}}

\nc{\ob}[1]{#1}

\nc{\SEP}{{\text{\rm SEP}}}
\nc{\NS}{{\text{\rm NS}}}
\nc{\LOCC}{{\text{\rm LOCC}}}
\nc{\PPT}{{\text{\rm PPT}}}
\nc{\EXT}{{\text{\rm EXT}}}
\nc{\Sym}{{\operatorname{Sym}}}

\nc{\clif}{\mathrm{Clif}}
\nc{\stab}{\mathrm{Stab}}
\nc{\ERLOCC}{{E_{\text{r,LOCC}}}}
\nc{\ERPPT}{{E_{\text{r,PPT}}}}
\nc{\ERLOCCinfty}{{E^{\infty}_{\text{r,LOCC}}}}
\nc{\Aram}{{\operatorname{\sf A}}}

\usepackage{tikz}
\usepackage{hyperref}
\hypersetup{colorlinks=true,citecolor=blue,linkcolor=blue,filecolor=blue,urlcolor=blue,breaklinks=true}

\makeatletter
\def\grd@save@target#1{%
  \def\grd@target{#1}}
\def\grd@save@start#1{%
  \def\grd@start{#1}}
\tikzset{
  grid with coordinates/.style={
    to path={%
      \pgfextra{%
        \edef\grd@@target{(\tikztotarget)}%
        \tikz@scan@one@point\grd@save@target\grd@@target\relax
        \edef\grd@@start{(\tikztostart)}%
        \tikz@scan@one@point\grd@save@start\grd@@start\relax
        \draw[minor help lines,magenta] (\tikztostart) grid (\tikztotarget);
        \draw[major help lines] (\tikztostart) grid (\tikztotarget);
        \grd@start
        \pgfmathsetmacro{\grd@xa}{\the\pgf@x/1cm}
        \pgfmathsetmacro{\grd@ya}{\the\pgf@y/1cm}
        \grd@target
        \pgfmathsetmacro{\grd@xb}{\the\pgf@x/1cm}
        \pgfmathsetmacro{\grd@yb}{\the\pgf@y/1cm}
        \pgfmathsetmacro{\grd@xc}{\grd@xa + \pgfkeysvalueof{/tikz/grid with coordinates/major step}}
        \pgfmathsetmacro{\grd@yc}{\grd@ya + \pgfkeysvalueof{/tikz/grid with coordinates/major step}}
        \foreach \x in {\grd@xa,\grd@xc,...,\grd@xb}
        \node[anchor=north] at (\x,\grd@ya) {\pgfmathprintnumber{\x}};
        \foreach \y in {\grd@ya,\grd@yc,...,\grd@yb}
        \node[anchor=east] at (\grd@xa,\y) {\pgfmathprintnumber{\y}};
      }
    }
  },
  minor help lines/.style={
    help lines,
    step=\pgfkeysvalueof{/tikz/grid with coordinates/minor step}
  },
  major help lines/.style={
    help lines,
    line width=\pgfkeysvalueof{/tikz/grid with coordinates/major line width},
    step=\pgfkeysvalueof{/tikz/grid with coordinates/major step}
  },
  grid with coordinates/.cd,
  minor step/.initial=.2,
  major step/.initial=1,
  major line width/.initial=2pt,
}
\makeatother

\usepackage{thmtools}
\usepackage{thm-restate}
\usepackage{etoolbox}
\makeatletter
\def\problem@s{}
\newcounter{problems@cnt}

\newcommand{\allproblems}{\problem@s}
\makeatother

\usepackage{amsmath,amsfonts,amssymb,array,graphicx,mathtools,multirow,bm,tcolorbox,relsize,booktabs}
\usepackage[utf8]{inputenc}
\usepackage[T1]{fontenc}

\newcommand{\idop}{\mathbbm{1}} 

\allowdisplaybreaks

\begin{document}
\title{Amortized Stabilizer R\'enyi Entropy of Quantum Dynamics}
\author{Chengkai Zhu}
\author{Yu-Ao Chen}
\author{Zanqiu Shen}
\author{Zhiping Liu}
\affiliation{Thrust of Artificial Intelligence, Information Hub,\\
The Hong Kong University of Science and Technology (Guangzhou), Guangzhou 511453, China}
\author{Zhan Yu}
\affiliation{Centre for Quantum Technologies, National University of Singapore, 117543, Singapore}
\author{Xin Wang}
\email{felixxinwang@hkust-gz.edu.cn}
\affiliation{Thrust of Artificial Intelligence, Information Hub,\\
The Hong Kong University of Science and Technology (Guangzhou), Guangzhou 511453, China}
\begin{abstract}
Unraveling the secrets of how much nonstabilizerness a quantum dynamic can generate is crucial for harnessing the power of magic states, the essential resources for achieving quantum advantage and realizing fault-tolerant quantum computation. In this work, we introduce the amortized $\alpha$-stabilizer R{\'e}nyi entropy, a magic monotone for unitary operations that quantifies the nonstabilizerness generation capability of quantum dynamics. Amortization is key in quantifying the magic of quantum dynamics, as we reveal that nonstabilizerness generation can be enhanced by prior nonstabilizerness in input states when considering the $\alpha$-stabilizer R{\'e}nyi entropy, while this is not the case for robustness of magic or stabilizer extent. 
We demonstrate the versatility of the amortized $\alpha$-stabilizer R\'enyi entropy in investigating the nonstabilizerness resources of quantum dynamics of computational and fundamental interest. In particular, we establish improved lower bounds on the $T$-count of quantum Fourier transforms and the quantum evolutions of one-dimensional Heisenberg Hamiltonians, showcasing the power of this tool in studying quantum advantages and the corresponding cost in fault-tolerant quantum computation.
\end{abstract}

\maketitle

\emph{Introduction.}---
The pursuit of quantum technologies is driven by the promise of achieving a significant quantum speed-up over classical systems in various domains, such as computing~\cite{Shor_1997,grover1996fast,RevModPhysAndrew} and quantum simulation~\cite{Lloyd1996,Childs_2018}. To harness the full potential of quantum technologies and realize this advantage, the \textit{nonstabilizerness} resources become essential since all \textit{stabilizer operations} can be efficiently simulated on a classical computer, known as the Gottesman-Knill theorem~\cite{Aaronson_2004, gottesman1997stabilizer}. 
These nonstabilizerness resources are referred to as \textit{magic states}, which cannot be prepared using stabilizer operations~\cite{Veitch2014}. The significance of the stabilizer operations and magic states stems from that former can usually be implemented transversely in the fault-tolerant quantum computation (FTQC) framework~\cite{shor1997faulttolerant,Campbell_2017,knill2005quantum} while losing universality for computation; and the latter can promote nonstabilizer operations into universal quantum computation via state injection~\cite{Zhou_2000,Gottesman_1999,Bravyi_2005}. 

Therefore, understanding the nonstabilizerness of quantum states becomes crucial and motivates extensive research on the quantum resource theory of magic states~\cite{Veitch2014,Howard2017,Wang2018,takagi2018convex,ahmadi2018quantification,Leone2024}. 
Among different approaches for quantifying nonstabilizerness resources~\cite{regula2017convex,Howard2017,Bravyi_2019}, the \textit{Stabilizer R{\'e}nyi Entropy} (SRE)~\cite{Leone2022} has recently been proposed, emerging as a powerful tool for studying the nonstabilizerness of pure states. The SREs can be efficiently evaluated on a quantum computer~\cite{Haug2024} and efficiently computed for matrix product states (MPSs)~\cite{Haug2023b}, thus promoting the study of nonstabilizerness in information scrambling~\cite{Ahmadi2024}, property estimation~\cite{Leone_2023,Lorenzo2024a}, many-body systems~\cite{Oliviero2022a,Lami2023,Davide2023,Liu2024,Frau2024,Chen2024}, as well as the intrinsic relationship between entanglement and nonstabilizerness~\cite{Gu2024,Gu2024a}

While quantifying the nonstabilizerness of quantum states is crucial, a more fundamental problem is to characterize the nonstabilizerness of quantum dynamics, which helps understand how nonstabilizerness resources manifest in quantum evolution and how they can be harnessed to achieve quantum advantages. To tackle this challenge, several approaches have been proposed to give dynamical magic measures, including the mana and generalized thauma of a quantum channel~\cite{WWS19}, the channel robustness of magic~\cite{Seddon2019}, the entropic and geometric measures~\cite{Saxena2022}. These measures provide valuable insights into the nonstabilizerness of quantum dynamics, e.g., its classical simulatability or its distance from the stabilizer operations.

Can we quantify the nonstabilizerness of a quantum dynamics in a more natural and intuitive way? An alternative approach is to determine how much nonstabilizerness a quantum dynamics can generate in maximum, which captures the nonstabilizerness generation capability of the dynamics. This concept is reminiscent of the so-called \textit{amortized resources} in other quantum resource theories, such as the amortized entanglement (or entangling capacity)~\cite{Bennett_2003,Campbell2010,Berta2017a,Kaur_2017,Wang_2023}, and the amortized coherence~\cite{Ben_Dana_2017,D_az_2018,Garc2015}. Notably, it has been shown that the maximum entanglement increase of a unitary can be enhanced when the initial two-qubit state possesses prior entanglement~\cite{Leifer2003}. Building upon the success of SREs in quantifying the nonstabilizerness of pure states, it is natural and intriguing to ask what is the maximum possible increase in SREs of a unitary operation; would initial nonstabilizerness in the input state boost such nonstabilizerness generation of the unitary? 

In this work, we answer this question affirmatively by demonstrating that the nonstabilizerness generation can indeed be boosted if the input states possess prior nonstabilizerness, compared to stabilizer states as inputs. This finding motivates our investigation into the amortized magic of multi-qubit unitary operations, which serves as magic measures for unitary operations, with a particular focus on the \textit{amortized $\alpha$-stabilizer R{\'e}nyi entropy} of an $n$-qubit unitary operation. Intriguingly, we also show that the presence of prior nonstabilizerness in input states does not provide an advantage when considering alternative measures such as the amortized magic robustness~\cite{Howard2016} or the amortized stabilizer extent~\cite{Bravyi2018}.

As a valuable application of the amortized $\alpha$-stabilizer R{\'e}nyi entropy, we demonstrate its utility in providing a lower bound on the number of $T$ gates required for the Clifford+$T$ implementation of an arbitrary unitary operation, which is an essential quantity for understanding the limitations of classical simulation, optimizing quantum circuits, and developing efficient FTQC schemes. For many magic gates and quantum evolutions, our lower bound improves upon the previous bound given by the unitary stabilizer nullity~\cite{Beverland2019,Jiang2023}. Our results establish the amortized $\alpha$-stabilizer R{\'e}nyi entropy as a powerful and versatile tool for investigating the nonstabilizerness resource of quantum dynamics.

\emph{Amortized magic.}---
Throughout the paper, we consider multi-qubit quantum systems and denote by $\cH_n$ the $2^n$ dimensional Hilbert space. The $n$-qubit Pauli group is denoted as $\hat{\cP}_n := \{i^{k}\sigma_{h_1}\ox\sigma_{h_2}\ox\cdots\ox\sigma_{h_n} |~k,h_j= 0,1,2,3\}$ where $\sigma_0 = \idop_2,~\sigma_1 = \sigma^x,~\sigma_2 = \sigma^y,~\sigma_3 = \sigma^z$ with identity matrix $\idop_2$ and Pauli matrices $\sigma^x,\sigma^y,\sigma^z$. The $n$-qubit Pauli group modulo phases is denoted as $\cP_n := \hat{\cP}_n/\langle \pm i\idop_{2^n}\rangle$. Let $\stab_n$ be the set of $n$-qubit stabilizer states which is the convex hull of pure stabilizer states. For an $n$-qubit pure quantum state $\ket{\psi}\in\cH_n$, the $\alpha$-stabilizer R\'enyi entropy is defined as~\cite{Leone2022} 
\begin{equation}
    M_{\alpha}(\ket{\psi}):=\frac{1}{1-\alpha}\log_2 \sum_{P\in\cP_n}\Xi_{P}^{\alpha}(\ket{\psi})-n,
\end{equation}
where $\Xi_P(\ket{\psi}):= |\bra{\psi} P \ket{\psi}|^2 / 2^n$ and $P\in\cP_n$ is an $n$-bit Pauli string. Notice that $\Xi_P(\ket{\psi}) \geq 0, \sum_{P\in\cP_n}\Xi_P(\ket{\psi}) = \braket{\psi}{\psi}^2=1$, and $\{\Xi_{P}(\ket{\psi})\}$ is a probability distribution. The $\alpha$-stabilizer R\'enyi entropy is shown to be a magic quantifier of quantum states that satisfies the faithfulness, the invariance under Clifford operations, and additivity under the tensor product of quantum states~\cite{Leone2022}. Recent studies have shown that $M_{\alpha}(\cdot)$ serves as a nonstabilizerness monotone for pure states when $\alpha \ge 2$ under general stabilizer protocols, while the monotonicity property fails for $0 \le \alpha < 2$~\cite{Haug2023a,Leone2024}.

We begin by reporting an intriguing phenomenon: the initial nonstabilizerness in the input state can boost the SRE generation of a unitary operation. Consider the single-qubit unitary operator $\sqrt{T} := \mathrm{diag}(1, e^{i\pi/8})$. By exhausting all single-qubit pure stabilizer states, we obtain
\begin{equation}
    \max_{\ket{\phi}\in\stab_1} M_2(\sqrt{T}\ket{\phi}) = M_2(\sqrt{T}\ket{+}) = 3-\log_2 7,
\end{equation}
where $\ket{+}:=(\ket{0} + \ket{1})/\sqrt{2}$. Furthermore, consider a magic state $\ket{\psi} = (\ket{0}+ e^{i\pi/10}\ket{1})/\sqrt{2}$. By direct calculation, we have 
\begin{equation}
\begin{aligned}
    & M_2(\ket{\psi}) = 3- \log_2\left[7+\cos\left(\frac{2\pi}{5}\right)\right],\\
    & M_{2}(\sqrt{T}\ket{\psi}) = 3- \log_2\left[7+\cos\left(\frac{9\pi}{10}\right)\right].
\end{aligned}
\end{equation}
It follows that $M_{2}(\sqrt{T}\ket{\psi}) - M_{2}(\ket{\psi}) > 3-\log_2 7 = M_2(\sqrt{T}\ket{+})$. This demonstrates that the initial nonstabilizerness can indeed boost the 2-stabilizer R{\'e}nyi entropy generation of a $\sqrt{T}$ gate. Interestingly, a similar property has been observed in the context of entanglement: when the initial two-qubit state has prior entanglement, the entangling capacity of a unitary operation can be enhanced~\cite{Leifer2003}. This property, also known as \textit{resource dependence}, has been shown to be related to the chosen resource measure~\cite{Campbell2010}. Our findings indicate that the $2$-stabilizer R{\'e}nyi entropy exhibits a nonstabilizerness dependence, i.e., prior nonstabilizerness can boost the nonstabilizerness generation of a unitary operation.

Motivated by this observation, we proceed to investigate the maximum amount of nonstabilizerness that a unitary operation can generate in general. To this end, we introduce the amortized magic of a multi-qubit unitary with respect to a magic measure of states $\MM(\cdot)$ that satisfies two axioms: i) faithfulness, i.e., $\MM(\ket{\psi}) \geq 0$ for all $\ket{\psi}\in \cH_n$ and $\MM(\ket{\psi}) = 0$ for all $\ket{\psi}\in\stab_n$. ii) monotonicity, i.e., $\MM(U\ket{\psi}) \leq \MM(\ket{\psi})$ for all $\ket{\psi}\in\cH_n$ where $U$ is any Clifford operation.
\begin{definition}[Amortized magic]
Let $\MM(\cdot)$ be a magic measure of multi-qubit quantum states. The amortized magic of an $n$-qubit unitary $U$ with respect to $\MM(\cdot)$ is defined as
\begin{equation*}
\MM^{\cA}(U):=\sup_{m\in\NN^+}\max_{\ket{\phi}\in \cH_{n+m}}\Big[ \MM\left(\left(U\ox \idop_{2^m}\right)\ket{\phi}\right) - \MM\left(\ket{\phi}\right)\Big],
\end{equation*}
where the maximization is with respect to all pure states in Hilbert space $\cH_{n+m}$. The strict amortized magic of $U$ is defined as
\begin{equation*}
    \widetilde{\MM}^{\cA}(U):= \max_{\ket{\phi} \in \stab_{n+m}} \MM\left(\left(U\ox \idop_{2^m}\right)\ket{\phi}\right),
\end{equation*}
where the input state $\ket{\phi}$ ranges over all pure stabilizer states in Hilbert space $\cH_{n+m}$.
\end{definition}

Although a similar definition of amortized magic has been introduced for general quantum channels~\cite{WWS19}, the crucial case of multi-qubit unitary operations has not been explored. The amortized magic of an $n$-qubit unitary arguably captures the most general scenario of nonstabilizerness generation through quantum evolution by optimizing all possible numbers of ancillary qubits and input states. It exhibits desirable properties from the perspective of dynamical quantum resource theory~\cite{Gilad2019,Liu2019} as follows.

\begin{enumerate}
    \item Faithfulness: $\MM^{\cA}(U) = 0$ if and only if $U$ is a Clifford gate, and $\MM^{\cA}(U) > 0$ otherwise.
    \item Subadditivity under composition: for any two unitaries $U$ and $V$, $\MM^{\cA}(UV) \leq \MM^{\cA}(U) + \MM^{\cA}(V)$.
    \item Subadditivity under tensor product: for any two unitaries $U$ and $V$, $\MM^{\cA}(U\ox V) \leq \MM^{\cA}(U) + \MM^{\cA}(V)$.
\end{enumerate}
The proofs of these properties can be found in the Appendix. It is worth noting that these properties imply that $\MM^{\cA}(C_1 U C_2) \leq \MM^{\cA}(U)$, and $\MM^{\cA}(C\ox U) \leq \MM^{\cA}(U)$ where $C_1, C_2,C$ are arbitrary Clifford operations. It follows that $\MM^{\cA}(\cdot)$ is a nonstabilizerness monotone of unitary operations, i.e., non-increasing under left and right composition with free operations as well as tensoring. Remarkably, it is easy to see that such monotonicity of $\MM^{\cA}(\cdot)$ holds as long as $\MM(\cdot)$ itself satisfies faithfulness and monotonicity under Clifford operations.

\emph{Amortized stabilizer R\'enyi entropy.}---
When $\MM$ is chosen as the $\alpha$-stabilizer R{\'e}nyi entropy, we introduce the (restricted) amortized $\alpha$-stabilizer R{\'e}nyi entropy of a unitary as follows.
\begin{definition}[Amortized stabilizer R\'enyi entropy]
The amortized $\alpha$-stabilizer R\'enyi entropy of an $n$-qubit unitary $U$ is defined as
\begin{equation*}
    M^{\cA}_\alpha(U):= \sup_{m\in \NN^+}\max_{\ket{\phi}\in \cH_{n+m}} M_{\alpha}[(U\ox \idop_{2^m})\ket{\phi}] - M_{\alpha}(\ket{\phi}),
\end{equation*}
where the maximization is with respect to all pure states in Hilbert space $\cH_{n+m}$. The strict amortized $\alpha$-stabilizer R\'enyi entropy of $U$ is defined as
\begin{equation*}
    \widetilde{M}^{\cA}_\alpha(U):= \sup_{m\in \NN^+}\max_{\ket{\phi}\in \stab_{n+m}} M_{\alpha}[(U\ox \idop_{2^m})\ket{\phi}].
\end{equation*}
\end{definition}
Despite the $\alpha$-stabilizer R\'enyi entropy is not a magic monotone for $\alpha < 2$~\cite{Haug2023a,Leone2024}, the amortized $\alpha$-stabilizer R\'enyi entropies for all $\alpha\ge 0$ remain magic monotones of unitary operations, as the $\alpha$-stabilizer R\'enyi entropy satisfies the faithfulness and is invariant under Clifford operations~\cite{Leone2022}.

For the amortized $\alpha$-stabilizer R\'enyi entropy, we first remark that the ancillary qubits are necessary in general by considering a single-qubit unitary operation $U=\sqrt{T}H\sqrt{T}$ where $H$ is the Hadamard gate. In particular, we show that it suffices to consider an $n$-qubit ancillary system for the strict amortized $\alpha$-stabilizer R\'enyi entropy.

\begin{proposition}\label{prop:SMA_meqn}
For any $n$-qubit unitary $U$,
\begin{equation}
    \widetilde{M}^{\cA}_\alpha(U) = \max_{\ket{\phi}\in \stab_{2n}} M_{\alpha}[(U\ox \idop_{2^n})\ket{\phi}].
\end{equation}
\end{proposition}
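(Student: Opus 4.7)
The plan is to reduce any ancilla size $m$ down to $m=n$ by exploiting the structure of bipartite stabilizer states together with the Clifford invariance and additivity of $M_\alpha$ established in \cite{Leone2022}. Restricting the supremum to $m=n$ already yields
$\widetilde{M}^{\cA}_\alpha(U) \ge \max_{\ket{\phi}\in\stab_{2n}} M_\alpha[(U\ox\idop_{2^n})\ket{\phi}]$, so the substantive content is the reverse direction: for every $m\in\NN^+$ and every $\ket{\phi}\in\stab_{n+m}$, I must exhibit a $2n$-qubit stabilizer state that matches the same SRE after the action of $U$.

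I would first dispose of the easy case $m\le n$. Given $\ket{\phi}\in\stab_{n+m}$, set $\ket{\phi'}:=\ket{\phi}\ox\ket{0}^{\ox(n-m)}\in\stab_{2n}$. Because $U$ acts trivially on the added ancillas, $(U\ox\idop_{2^n})\ket{\phi'} = [(U\ox\idop_{2^m})\ket{\phi}]\ox\ket{0}^{\ox(n-m)}$, and the additivity of $M_\alpha$ under tensor products combined with $M_\alpha(\ket{0}^{\ox(n-m)})=0$ gives equality of SREs. For $m>n$, the key tool is that any pure stabilizer state on $n+m$ qubits, viewed across the $n\mid m$ cut, has Schmidt rank at most $2^n$ and admits a flat stabilizer Schmidt decomposition. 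Consequently there exists a Clifford $V$ acting only on the $m$-qubit ancilla such that $(\idop_{2^n}\ox V)\ket{\phi} = \ket{\tilde\phi}\ox\ket{0}^{\ox(m-n)}$ with $\ket{\tilde\phi}\in\stab_{2n}$. Since $V$ commutes with $U\ox\idop_{2^m}$, one obtains
\begin{equation*}
(U\ox\idop_{2^m})\ket{\phi} = (\idop_{2^n}\ox V^\dagger)\bigl[(U\ox\idop_{2^n})\ket{\tilde\phi}\ox\ket{0}^{\ox(m-n)}\bigr].
\end{equation*}
Clifford invariance of $M_\alpha$ removes the outer $V^\dagger$, and tensor additivity removes the $\ket{0}^{\ox(m-n)}$ factor, yielding
$M_\alpha[(U\ox\idop_{2^m})\ket{\phi}] = M_\alpha[(U\ox\idop_{2^n})\ket{\tilde\phi}]$, which is bounded by the right-hand side of the claim.

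The main technical obstacle is establishing the existence of the compressing Clifford $V$, i.e.\ that every bipartite pure stabilizer state can be brought, by a Clifford on the larger side alone, to a $2n$-qubit stabilizer state tensored with $\ket{0}$'s. I would invoke the standard canonical form for bipartite stabilizer states (``stabilizer Schmidt decomposition''): the reduced density matrix on the ancilla is a uniform mixture over a stabilizer code space of dimension $2^k$ with $k\le n$, and one may iteratively apply controlled Paulis and single-qubit Cliffords on the ancilla to disentangle $m-k$ of its qubits into $\ket{0}$'s, leaving the system entangled only with $k\le n$ ancilla qubits. Padding out to exactly $n$ ancilla qubits on the ancilla side (again by tensoring with $\ket{0}$'s, which is free under Cliffords) produces the required $\ket{\tilde\phi}\in\stab_{2n}$. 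Combining the three cases $m<n$, $m=n$, $m>n$ yields both inequalities and therefore the proposition.
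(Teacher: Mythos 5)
Your proposal is correct and follows essentially the same route as the paper: the paper's Lemma in Appendix A likewise invokes the canonical form of bipartite stabilizer states (Fattal et al.) to write $\ket{\phi}_{AB}=(\idop_{2^n}\ox V_B)\ket{\psi}_{AB'}\ket{\tau}_{B''}$ with a Clifford $V_B$ acting only on the ancilla, and then applies Clifford invariance and tensor additivity of $M_\alpha$. Your explicit treatment of the $m<n$ case by padding with $\ket{0}$'s is a minor completeness point the paper leaves implicit, but the substance is identical.
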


The proof can be found in the Appendix and follows a similar idea in~\cite{Seddon2019}. We can similarly define the amortized magic robustness and the amortized stabilizer extent. As demonstrated in the Appendix, both measures have no nonstabilizerness dependence. That is, employing these measures to quantify the nonstabilizerness of the states, any prior nonstabilizerness will not aid in estimating the magic-generating capability of a given unitary operation, and the amortized magic robustness (stabilizer extent) is identical to the strict amortized magic robustness (stabilizer extent).

\begin{figure}
    \centering
    \includegraphics[width=1\linewidth]{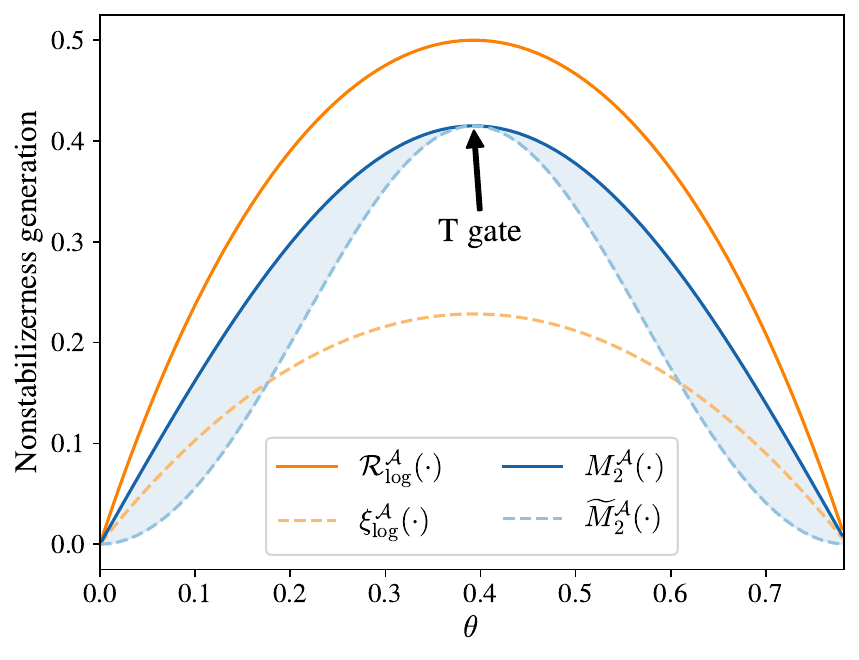}
    \caption{The maximum nonstabilizerness generation of $R_z(\theta) = e^{-i\theta Z}$. The solid blue and dashed blue lines represent a lower bound on the amortized 2-stabilizer R{\'e}nyi entropy and the strict amortized 2-stabilizer R{\'e}nyi entropy of $R_z(\theta)$, respectively. The solid orange line represents the (strict) amortized magic robustness and the dashed orange line represents the (strict) amortized stabilizer extent.}
    \label{fig:MARz}
\end{figure}
Importantly, optimization over the input state and the number of ancillary qubits render the amortized $\alpha$-stabilizer R{\'e}nyi entropy intractable to compute in most cases and can only be estimated via some numerical approaches. As an example, for single-qubit $R_z(\theta) = e^{-i\theta Z}$ gates which include the $T$ gate ($T := \mathrm{diag}(1, e^{i\pi/4})$), we set $m = 1$ and utilize the Riemannian Trust-Region method~\cite{genrtr2007} implemented in \texttt{Manopt} toolbox~\cite{manopt} to optimize the input pure state, and obtain lower bounds on $M_{2}^{\cA}(R_z(\theta))$. When $\theta$ varies, we also calculate the (strict) amortized magic robustness and the (strict) amortized stabilizer extent, as shown in Fig.~\ref{fig:MARz}. It can be seen that $M_{2}(R_z(\theta)) > \widetilde{M}_{2}(R_z(\theta))$ when there exists prior nonstabilizerness.

Moreover, we uncover a unique structure for analyzing the $T$ gate, a crucial resource in the Clifford+$T$ model for universal quantum computation. The amortized 2-stabilizer R\'enyi entropy of the $T$ gate is as follows.

\begin{theorem}\label{thm:amor_magic_T}
For a single-qubit $T$ gate, $M^{\cA}_2(T) = 2-\log_2 3$.
\end{theorem}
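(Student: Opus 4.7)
The plan is to establish $M^{\cA}_2(T) = 2 - \log_2 3$ by proving matching lower and upper bounds. For the \emph{lower bound}, I would take the trivial input $\ket{\phi} = \ket{+}$ with no ancilla. Since $\ket{+}$ is stabilizer, $M_2(\ket{+}) = 0$, while a direct computation of the Pauli expectations of $T\ket{+} = (\ket{0} + e^{i\pi/4}\ket{1})/\sqrt{2}$ yields $\Xi_I = 1/2$, $\Xi_X = \Xi_Y = 1/4$, $\Xi_Z = 0$, whence $M_2(T\ket{+}) = -\log_2(3/8) - 1 = 2 - \log_2 3$. Thus $M^{\cA}_2(T) \geq 2 - \log_2 3$.

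For the \emph{upper bound}, I would show $M_2((T\otimes \idop_{2^m})\ket{\phi}) - M_2(\ket{\phi}) \leq 2 - \log_2 3$ for every pure $\ket{\phi} \in \cH_{1+m}$ and every $m$. Writing $c_P := \bra{\phi}P\ket{\phi}$ and $c'_P := \bra{\phi}(T^\dagger\otimes\idop_{2^m}) P (T\otimes\idop_{2^m}) \ket{\phi}$, and setting $S := \sum_P c_P^4$, $S' := \sum_P (c'_P)^4$ with $P$ ranging over Paulis on $1+m$ qubits, the bound becomes $4S' \geq 3S$. Decompose $\ket{\phi} = \ket{0}\ket{\chi_0} + \ket{1}\ket{\chi_1}$ and define, for each $P \in \cP_m$, the real scalars $u_P = \bra{\chi_0}P\ket{\chi_0}$, $v_P = \bra{\chi_1}P\ket{\chi_1}$ and the complex scalar $f_P = \bra{\chi_0}P\ket{\chi_1}$. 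The Heisenberg action of $T$ is trivial on $I,Z$ and a $\pi/4$ rotation in the $X$--$Y$ plane, which gives $c'_{IP}=c_{IP}$, $c'_{ZP}=c_{ZP}$ and $(c'_{XP},c'_{YP}) = ((c_{XP}-c_{YP}),(c_{XP}+c_{YP}))/\sqrt{2}$. Using $c_{IP}=u_P+v_P$, $c_{ZP}=u_P-v_P$, $c_{XP}=2\operatorname{Re}f_P$, $c_{YP}=2\operatorname{Im}f_P$, a direct expansion then gives
\begin{equation*}
4S' - 3S = \sum_{P\in\cP_m}\bigl[2u_P^4 + 12 u_P^2 v_P^2 + 2 v_P^4 - 16 |f_P|^4 + 224\,(\operatorname{Re}f_P)^2(\operatorname{Im}f_P)^2\bigr].
\end{equation*}
Dropping the manifestly nonnegative last term and invoking the AM--GM inequality $u_P^4 + 6u_P^2 v_P^2 + v_P^4 \geq 8 u_P^2 v_P^2$, the remaining task reduces to the global inequality $\sum_P u_P^2 v_P^2 \geq \sum_P |f_P|^4$.

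The \emph{main obstacle} is this last inequality, which fails pointwise (take $\ket{\chi_0} = \ket{0}, \ket{\chi_1} = \ket{1}$ at $P = X$) but must hold after summation. My plan is to lift it to the four-copy space $\cH_m^{\otimes 4}$ via the operator $\Sigma_4 := \sum_{P\in\cP_m} P^{\otimes 4}$. Setting $\ket{v} := \ket{\chi_1}\ket{\chi_0}\ket{\chi_1}\ket{\chi_0}$ and $\ket{v'} := \ket{\chi_0}\ket{\chi_1}\ket{\chi_0}\ket{\chi_1}$, one checks by direct factorization of each term in $\Sigma_4$ that $\sum_P u_P^2 v_P^2 = \bra{v}\Sigma_4\ket{v} = \bra{v'}\Sigma_4\ket{v'}$ while $\sum_P |f_P|^4 = \bra{v'}\Sigma_4\ket{v}$, so the operator Cauchy--Schwarz inequality closes the argument provided $\Sigma_4 \succeq 0$. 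To certify positivity I would exploit the site-tensor factorization $\Sigma_4 \cong (\Sigma_4^{(1)})^{\otimes m}$ obtained by regrouping the $4m$ qubits by site, and then verify by direct action on computational basis states that the single-qubit $\Sigma_4^{(1)} = I^{\otimes 4} + X^{\otimes 4} + Y^{\otimes 4} + Z^{\otimes 4}$ annihilates the odd-parity subspace of $(\CC^2)^{\otimes 4}$ and acts on the even-parity subspace as the evidently PSD operator $2(I^{\otimes 4} + X^{\otimes 4})$. Collecting these estimates yields $4S' - 3S \geq 0$ and hence the desired upper bound, matching the lower bound.
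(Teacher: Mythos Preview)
Your argument is correct and follows the same overall architecture as the paper: the lower bound via $\ket{+}$, and the upper bound via the pointwise inequality $4R_2((T\otimes\idop_{2^m})\ket{\phi})\ge 3R_2(\ket{\phi})$, with the latter ultimately certified by the positive semidefiniteness of a fourth-moment Pauli operator $\sum_{P}P^{\otimes 4}$ on the ancilla.

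Where you diverge is in the algebraic packaging of that last step. The paper stays in the Pauli coordinates $p_{jP}=\bra{\phi}(\sigma_j\otimes P)\ket{\phi}$, drops only the cross term $12p_{1P}^2p_{2P}^2$, and is left with $\sum_P(p_{0P}^4+p_{3P}^4-p_{1P}^4-p_{2P}^4)$; it then recognizes this directly as $\bra{\phi}^{\otimes 4}M\ket{\phi}^{\otimes 4}$ for the \emph{signed} single-qubit operator $\sum_{P\in\cP_1}(-1)^{\delta_{P\in\{X,Y\}}}P^{\otimes 4}$ tensored with the unsigned $\Sigma_4$, and checks that this signed operator is itself PSD. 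You instead pass to the $\ket{0}/\ket{1}$ block decomposition, throw away a bit more (the $224(\operatorname{Re}f_P)^2(\operatorname{Im}f_P)^2$ term and, via AM--GM, the $2(u_P^2-v_P^2)^2$ surplus), and reduce to $\sum_P u_P^2v_P^2\ge\sum_P|f_P|^4$, which you then read off as a Cauchy--Schwarz inequality for the semi-inner product induced by the \emph{unsigned} $\Sigma_4$ on the four-copy vectors $\ket{\chi_1\chi_0\chi_1\chi_0}$ and $\ket{\chi_0\chi_1\chi_0\chi_1}$. Your route avoids having to verify positivity of the signed operator at the cost of the extra change of variables and the Cauchy--Schwarz step; the paper's route is slightly sharper (it discards less) and applies the PSD lemma in one shot to $\ket{\phi}^{\otimes 4}$, which is what lets the same template extend cleanly to $CCZ$ in their Appendix~D.
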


The technical proof can be found in the Appendix where we also obtained a result for the double controlled $Z$ gate, i.e., $M^{\cA}_2(CCZ) = 5-\log_2 11$. We note that for these gates, any ancillary qubits or magic input states cannot enhance the 2-stabilizer R{\'e}nyi entropy increase of the gate as the amortized 2-stabilizer R{\'e}nyi entropy can be achieved by choosing a single-qubit state $\ket{+}$ as input. As evident in Fig.~\ref{fig:MARz}, this simplification of the amortized $\alpha$-stabilizer entropy R{\'e}nyi does not generally hold.

\emph{Lower bound on $T$-count.}---
Now we demonstrate the application of the amortized $\alpha$-stabilizer R\'enyi entropy in estimating the $T$-count of a unitary operation. 
The \textit{$T$-count} of a unitary operation $U$, denoted as $t(U)$, is defined as the minimum number of $T$ gates required to decompose $U$ into a sequence of gates from the Clifford+$T$ gate set, without the use of ancillary qubits or measurements. It is a crucial measure of computational resources in quantum circuits as it quantifies the difficulty of classically simulating quantum circuits~\cite{Bravyi_2016,Bravyi_2019,Seddon2021}, and dominates the overall cost due to the resource-intensive magic state distillation required to implement $T$ gates. 

We shall see that the amortized magic can be leveraged to establish a lower bound on the $T$-count of an arbitrary unitary operation.

\begin{theorem}[Lower bound on $T$-count]\label{thm:Tcount_lowbound}
For any $n$-qubit unitary $U$, its $T$-count is lower bounded by 
\begin{equation}
    t(U) \geq \MM^{\cA}(U)/\MM^{\cA}(T).
\end{equation}
\end{theorem}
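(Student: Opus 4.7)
The plan is to exploit the three structural properties of the amortized magic $\MM^{\cA}(\cdot)$ listed right after its definition---faithfulness on Clifford operations, subadditivity under composition, and subadditivity under tensor product---to telescope along a gate decomposition of $U$. By the very definition of $T$-count, there exist an integer $t=t(U)$, $n$-qubit Cliffords $C_0,C_1,\ldots,C_t$, and qubit indices $k_1,\ldots,k_t\in\{1,\ldots,n\}$ such that
\begin{equation*}
U \;=\; C_t\,T^{(k_t)}\,C_{t-1}\,T^{(k_{t-1})}\,\cdots\,C_1\,T^{(k_1)}\,C_0,
\end{equation*}
where $T^{(k)}:=\idop_{2^{k-1}}\ox T\ox \idop_{2^{n-k}}$ denotes the $T$ gate acting on the $k$-th qubit. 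No ancillary qubits appear here, consistent with the ancilla-free definition of $T$-count.

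The first step would be to apply composition subadditivity $2t$ times to obtain
\begin{equation*}
\MM^{\cA}(U)\;\le\;\sum_{i=0}^{t}\MM^{\cA}(C_i)\;+\;\sum_{j=1}^{t}\MM^{\cA}\!\bigl(T^{(k_j)}\bigr).
\end{equation*}
The first sum vanishes by faithfulness since every $C_i$ is Clifford. The second step would be to bound each embedded single-qubit $T$-gate: invoking tensor subadditivity twice gives
\begin{equation*}
\MM^{\cA}\!\bigl(T^{(k_j)}\bigr)\;\le\;\MM^{\cA}(\idop_{2^{k_j-1}})+\MM^{\cA}(T)+\MM^{\cA}(\idop_{2^{n-k_j}})\;=\;\MM^{\cA}(T),
\end{equation*}
using again that the identity is Clifford. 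Combining the two bounds yields $\MM^{\cA}(U)\le t(U)\,\MM^{\cA}(T)$, and dividing by $\MM^{\cA}(T)$, which is strictly positive by faithfulness since $T$ is non-Clifford, gives the claimed lower bound.

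I do not expect a serious technical obstacle here: the argument is essentially a telescoping application of properties already established in the appendix. The one subtlety worth flagging is that $\MM^{\cA}(U)$ carries an implicit supremum over ancilla dimension, yet the tensor subadditivity $\MM^{\cA}(A\ox B)\le\MM^{\cA}(A)+\MM^{\cA}(B)$ is stated at the level of unitaries and absorbs this optimization uniformly, so lifting from $T^{(k_j)}$ to $T$ requires no extra bookkeeping about auxiliary systems. Care should simply be taken to record that the result is vacuous when $U$ itself is Clifford (both sides are zero) and that the $T$-count is well-defined only under Clifford+$T$ exactness, which is the standard assumption.
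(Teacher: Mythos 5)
Your proposal is correct and follows essentially the same route as the paper: telescope composition subadditivity along the Clifford+$T$ decomposition, kill the Clifford terms by faithfulness, and reduce the embedded $T^{(k_j)}$ to $\MM^{\cA}(T)$ (the paper does this last step by noting $\MM^{\cA}(T\ox\idop_{2^{n-1}})=\MM^{\cA}(T)$ from the definition, while you use tensor subadditivity, which gives the inequality in the direction needed). The extra care you take with arbitrary qubit positions and the positivity of $\MM^{\cA}(T)$ is consistent with, and slightly more explicit than, the paper's argument.
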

\begin{proof}
For any $n$-qubit unitary $U$, without loss of generality, it can be decomposed as $U = C_k (T\ox \idop_{2^{n-1}})  C_{k-1}  \cdots  C_2   (T\ox \idop_{2^{n-1}})  C_1$, where $C_1, C_2, \cdots, C_{k}$ are Clifford gates. By the subadditivity under composition and the faithfulness of the amortized magic, we have $\MM^{\cA}(U) \leq t(U)\cdot \MM^{\cA}(T\ox \idop_{2^{n-1}}) + 0$. It follows that $t(U) \geq \MM^{\cA}(U)/\MM^{\cA}(T\ox \idop_{2^{n-1}}) \ge \MM^{\cA}(U)/\MM^{\cA}(T)$, where in the second inequality we used the fact that $\MM^{\cA}(T\ox \idop_{2^{n-1}}) = \MM^{\cA}(T)$ by the definition of $\MM^{\cA}(\cdot)$.
\end{proof}

In fact, such a lower bound mainly relies on the faithfulness and subadditivity under the composition of the unitary nonstabilizerness quantifier. Different from amortized magic, an alternative approach for evaluating the efficacy of a unitary at generating nonstabilizerness is to consider the average nonstabilizerness generated by it. Then the \textit{nonstabilizing power} of a unitary operator $U$ based on the $\alpha$-stabilizer R\'enyi entropy is defined as~\cite{Leone2022}
\begin{equation}
    \cM_{\alpha}(U) = \frac{1}{|\stab|} \sum_{\ket{\phi}\in\stab} M_{\alpha}(U(\ket{\phi}).
\end{equation}
We note that $\cM_\alpha(\cdot)$ generally does not satisfy the subadditivity under the composition of unitaries. For example, let $U_1 = R_z(\pi)R_x(\pi/2)R_z(\pi/10),~U_2 = \sqrt{T}$, we can easily check that $\cM_2(U_1U_2) > \cM_2(U_1) + \cM_2(U_2)$. Consequently, Theorem~\ref{thm:Tcount_lowbound} can not be derived using $\cM_2(\cdot)$. A lower bound on the $T$-count based on $\cM_\alpha(\cdot)$ was derived from its connection to the unitary stabilizer nullity~\cite{Leone2022} which is less tight than the one provided by the unitary stabilizer nullity itself~\cite{Jiang2023}. 

Combining Theorem~\ref{thm:amor_magic_T}, we directly obtain the following lower bound on the $T$-count of a unitary operation, in terms of its amortized 2-stabilizer R{\'e}nyi entropy.
\begin{corollary}\label{cor:M2U_Tlowerbound}
For an $n$-qubit unitary $U$, its $T$-count is lower bounded by $t(U) \geq \frac{M_{2}^{\cA}(U)}{2-\log_2 3} \geq \frac{M_{2}(\ket{\Phi_U})}{2-\log_2 3}$ where $\ket{\Phi_U}$ is the Choi state of $U$.
\end{corollary}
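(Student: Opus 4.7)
The plan is to verify the two inequalities of the corollary separately, with both reducing to direct applications of results established earlier in the paper.

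For the first inequality $t(U) \geq M_2^{\cA}(U)/(2-\log_2 3)$, I would simply instantiate Theorem~\ref{thm:Tcount_lowbound} with the magic measure $\MM(\cdot) = M_2(\cdot)$. This requires noting that $M_2$ satisfies the two axioms needed to define the amortized magic (faithfulness and invariance under Clifford operations), both of which are listed as properties of the SRE in the paragraph following the definition of $M_\alpha$. Theorem~\ref{thm:Tcount_lowbound} then yields $t(U) \geq M_2^{\cA}(U)/M_2^{\cA}(T)$, and substituting the explicit value $M_2^{\cA}(T) = 2-\log_2 3$ from Theorem~\ref{thm:amor_magic_T} delivers the bound.

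For the second inequality, it suffices to show that $M_2^{\cA}(U) \geq M_2(\ket{\Phi_U})$. The key observation is that the Choi state $\ket{\Phi_U} = (U \ox \idop_{2^n})\ket{\Phi^+}$ is obtained by applying $U \ox \idop_{2^n}$ to the $2n$-qubit maximally entangled state $\ket{\Phi^+} = 2^{-n/2} \sum_{i=0}^{2^n-1}\ket{i}\ket{i}$, which is a pure stabilizer state (it is stabilized by $X_j \ox X_j$ and $Z_j \ox Z_j$ for $j=1,\ldots,n$, and is equivalent up to local Cliffords to an $n$-fold tensor of Bell pairs). By the faithfulness of the SRE, $M_2(\ket{\Phi^+}) = 0$. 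Specializing the definition of $M_2^{\cA}(U)$ to the choice $m = n$ and $\ket{\phi} = \ket{\Phi^+} \in \stab_{2n}$ then gives
\[
M_2^{\cA}(U) \geq M_2\bigl((U \ox \idop_{2^n})\ket{\Phi^+}\bigr) - M_2(\ket{\Phi^+}) = M_2(\ket{\Phi_U}),
\]
and dividing by the positive constant $2 - \log_2 3$ yields the claim.

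There is no serious obstacle in this proof; both inequalities are substitution exercises granted the two theorems already established. The only point worth stating explicitly is the well-known fact that $\ket{\Phi^+}$ is a stabilizer state, which is what allows us to drop the subtracted $M_2$ term and obtain a bound in terms of the Choi state alone. I would therefore keep the written proof concise, presenting it as a two-sentence deduction immediately after citing Theorems~\ref{thm:Tcount_lowbound} and~\ref{thm:amor_magic_T}.
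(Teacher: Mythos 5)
Your proof is correct and matches the paper's (implicit) argument exactly: the first inequality is Theorem~\ref{thm:Tcount_lowbound} instantiated with $\MM = M_2$ together with $M_2^{\cA}(T) = 2-\log_2 3$ from Theorem~\ref{thm:amor_magic_T}, and the second follows by taking $m=n$ and the maximally entangled stabilizer input in the definition of $M_2^{\cA}(U)$. The one point you rightly flag --- that $M_2$ need only satisfy faithfulness and Clifford-invariance for the amortized machinery to apply, despite not being a full monotone --- is exactly the caveat the paper itself makes, so nothing is missing.
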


Corollary~\ref{cor:M2U_Tlowerbound} directly relates the 2-stabilizer R\'enyi entropy of a unitary's Choi state to its $T$-count. Notice that the $\alpha$-stabilizer R\'enyi entropies can be efficiently measured for integer index $\alpha > 1$ via Bell measurements where $O(\alpha)$ copies and $O(\alpha n)$ classical computational time are required for an $n$-qubit quantum state~\cite{Haug2024}. Therefore, the lower bound proposed above for an $n$-qubit unitary can be evaluated in a quantum computer via $2n$ qubits, $O(\epsilon^{-2})$ queries of the unitary and classical computational time with additive error $\epsilon$. Such evaluation on the $T$-count can be done for unknown unitary operations given as oracles. Moreover, since the $T$-count of an $n$-qubit unitary has an upper bound $O(n^2)$~\cite{Amy_2019}, from Corollary~\ref{cor:M2U_Tlowerbound} we know that the amortized 2-stabilizer R\'enyi entropy of an $n$-qubit unitary is upper bounded by $O(n^2)$. 

We apply our lower bound to the 3-qubit double controlled-$R_z$ gates $CCR_z(\theta)$ and the quantum Fourier transforms $\mathrm{QFT}_n$ where $\mathrm{QFT}_n \ket{j} := \frac{1}{\sqrt{2^n}} \sum_{k=0}^{2^n-1} e^{2\pi ijk/2^n}\ket{k}$. When $\theta\in [2\pi/3, 4\pi/3]$, we have $t(CCR_z(\theta)) \geq \lceil M_{2}^{\cA}(U)/(2-\log_2 3)\rceil \geq 4$ which is tighter than the lower bound $3$ given by the unitary stabilizer nullity. Similarly, we have $t(\mathrm{QFT}_3) \geq \lceil M_{2}^{\cA}(\mathrm{QFT}_3)/(2-\log_2 3)\rceil \geq 6$ and $t(\mathrm{QFT}_3) \geq \lceil M_{2}^{\cA}(\mathrm{QFT}_3)/(2-\log_2 3)\rceil \geq 8$ which are tighter than previous lower bounds $4$ and $6$ for $\mathrm{QFT}_3$ and $\mathrm{QFT}_4$~\cite{Jiang2023}, respectively.

Let us further consider a concrete example of quantum evolution $e^{-iHt}$ generated by a Hamiltonian $H$. 
In particular, we estimate the $T$-count of this time evolution generated by a one-dimensional Heisenberg Hamiltonian
\begin{equation}
\hat{H} = \sum_{k=1}^N (\sigma_k^x \sigma_{k+1}^x + \sigma_k^y \sigma_{k+1}^y) + \Delta \sigma_k^z \sigma_{k+1}^z + h_k \sigma^z_k,
\end{equation}
where \(\Delta\) represents the interaction strength, and the random disorder \(h_k \in [-W, W]\).
The results of $T$ count based on Corollary~\ref{cor:M2U_Tlowerbound} and the nullity of the unitary stabilizer are presented in Fig.~\ref{fig:t-count_lower_bounds}. It can be observed that the former demonstrates an advantage beyond a critical time across various random disorder ranges. This provides a more accurate estimation of the number of $T$ gates required to implement such quantum evolution as the simulation time $t$ increases and, thus, can help assess the feasibility of simulating quantum dynamics on near-term and fault-tolerant quantum devices. Additionally, it is worth noting that the $T$-count here is calculated using the Choi state, while the amortized method in Theorem \ref{thm:Tcount_lowbound} may offer a further improved lower bound on the $T$-count.

\begin{figure}[t]
    \centering
    \includegraphics[width=1\linewidth]{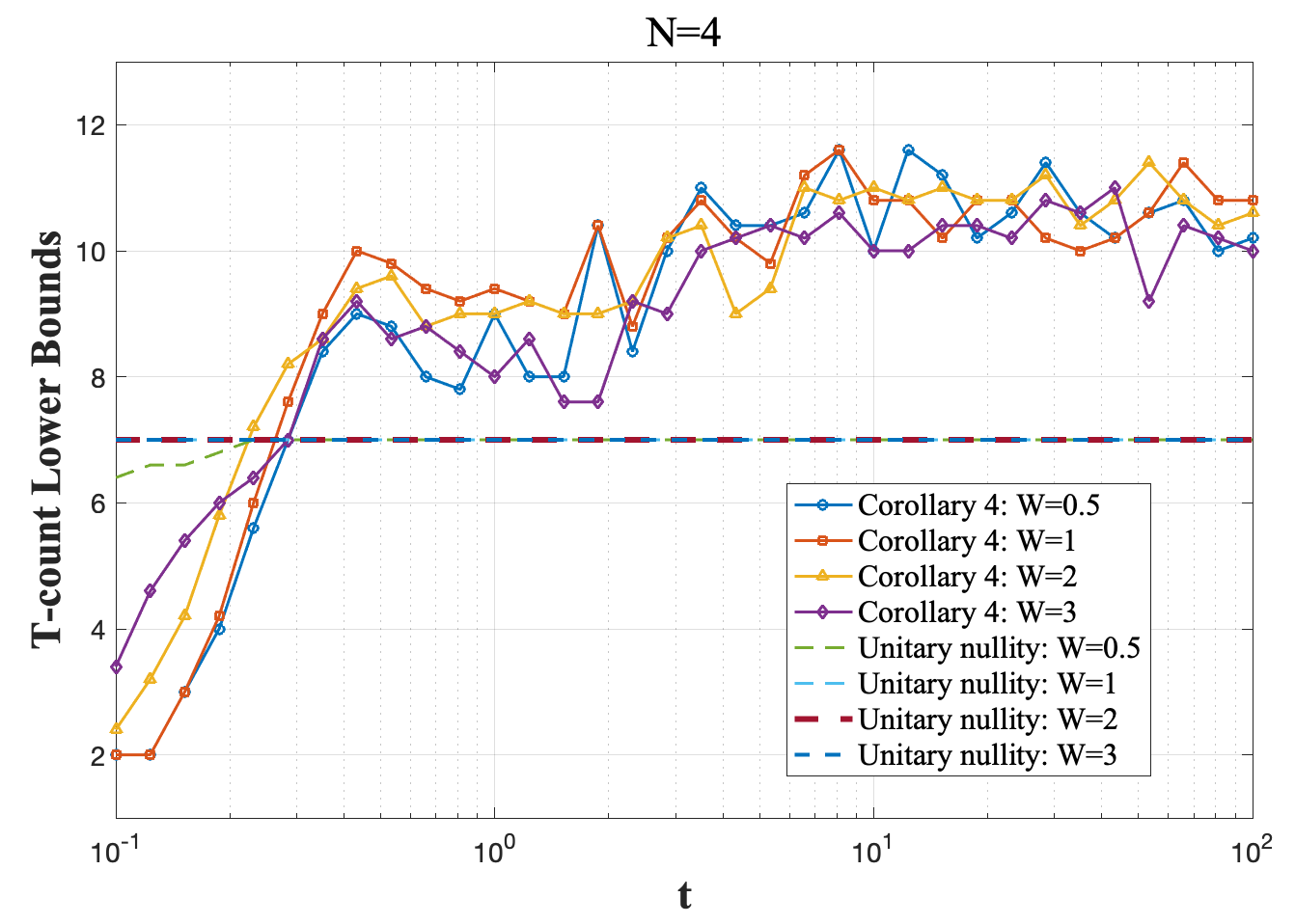}
    \caption{The $T$-count lower bounds of the quantum evolution $e^{-i \hat{H}  t}$, where $\hat{H}$ is the one-dimensional Heisenberg Hamiltonian. The interaction strength $\Delta$ is set as $0.2$. The solid and the dashed lines represent the $T$-count lower bounds obtained by Corollary~\ref{cor:M2U_Tlowerbound} and the unitary stabilizer nullity, respectively.}
    \label{fig:t-count_lower_bounds}
\end{figure}

\emph{Concluding remarks.---}
In this work, we have introduced the amortized $\alpha$-stabilizer R{\'e}nyi entropy of a unitary operation, which captures the maximum nonstabilizerness generation and exhibits desirable properties as a magic monotone. Notably, prior nonstabilizerness in the input state can boost the nonstabilizerness generation. As an application, we have derived a lower bound on the $T$-count of unitaries based on the amortized 2-stabilizer R{\'e}nyi entropy, which enjoys several advantages over previous methods. Our work adds a valuable tool to the magic resource theory and it remains an insightful direction to explore the properties of amortized $\alpha$-stabilizer R\'enyi entropy. For instance, it is unknown whether the equality $M_{\alpha}^{\cA}(U^\dagger) = M_{\alpha}^{\cA}(U)$ holds for an arbitrary unitary operation $U$. If the equality does not hold, it would imply an intriguing phenomenon: the nonstabilizerness generation and destruction capacities of a unitary operation are unequal, similar to the case for entanglement, where the entangling and disentangling powers of a unitary are unequal~\cite{Linden2009}.

\emph{Acknowledgement.---}
The authors thank Sixuan Wu, Xia Liu, and Chenghong Zhu for their comments. This work was partially supported by the National Key R\&D Program of China (Grant No. 2024YFE0102500), the Guangdong Provincial Quantum Science Strategic Initiative (Grant No. GDZX2303007), the Guangdong Provincial Key Lab of Integrated Communication, Sensing and Computation for Ubiquitous Internet of Things (Grant No. 2023B1212010007), the Quantum Science Center of Guangdong-Hong Kong-Macao Greater Bay Area, and the Education Bureau of Guangzhou Municipality.

\bibliography{ref}

\appendix
\setcounter{subsection}{0}
\setcounter{table}{0}
\setcounter{figure}{0}

\vspace{2cm}
\onecolumngrid
\vspace{2cm}

\renewcommand{\theequation}{S\arabic{equation}}
\renewcommand{\theproposition}{S\arabic{proposition}}
\renewcommand{\thedefinition}{S\arabic{definition}}
\renewcommand{\thefigure}{S\arabic{figure}}
\setcounter{equation}{0}
\setcounter{table}{0}
\setcounter{section}{0}
\setcounter{proposition}{0}
\setcounter{definition}{0}
\setcounter{figure}{0}

\section{Appendix A: Properties of the amortized magic}\label{app:MAproperty}

\begin{lemma}[Faithfulness]
Let $\MM(\cdot)$ be a magic measure of quantum states. $\MM^{\cA}(U) = 0$ if and only if $U$ is a Clifford gate.
\end{lemma}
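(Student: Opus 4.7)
The plan is to prove both directions of the biconditional directly from the two axioms imposed on $\MM(\cdot)$ (faithfulness and Clifford monotonicity), together with the well-known bijection between Clifford unitaries and stabilizer Choi states.

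First, I would handle the easy ``if'' direction. Assume $U$ is Clifford. I first argue that $\MM^{\cA}(U)\ge 0$ by evaluating the supremum at any stabilizer input $\ket{\phi}$: since $U\otimes \idop_{2^m}$ is Clifford, it maps stabilizer states to stabilizer states, so both $\MM(\ket{\phi})$ and $\MM((U\otimes \idop_{2^m})\ket{\phi})$ vanish by faithfulness. For the reverse inequality, I would apply Clifford monotonicity twice to an arbitrary $\ket{\phi}\in\cH_{n+m}$: monotonicity of $U\otimes \idop_{2^m}$ gives $\MM((U\otimes \idop_{2^m})\ket{\phi})\le \MM(\ket{\phi})$, and monotonicity applied to the Clifford $U^{\dagger}\otimes \idop_{2^m}$ on the state $(U\otimes \idop_{2^m})\ket{\phi}$ gives $\MM(\ket{\phi})\le \MM((U\otimes \idop_{2^m})\ket{\phi})$, so the difference is identically zero and $\MM^{\cA}(U)=0$.

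For the ``only if'' direction I would argue by contrapositive: assuming $U$ is not Clifford, I exhibit a single input that witnesses $\MM^{\cA}(U)>0$. Take $m=n$ and let $\ket{\phi}=\ket{\Phi^+}=\frac{1}{\sqrt{2^n}}\sum_i\ket{i}\ket{i}$ be the maximally entangled state, which is a pure stabilizer state so $\MM(\ket{\Phi^+})=0$ by faithfulness. Then $(U\otimes \idop_{2^n})\ket{\Phi^+}$ is precisely the Choi state $\ket{\Phi_U}$ of $U$. The Choi-Jamio\l{}kowski correspondence specialized to unitaries tells us that $\ket{\Phi_U}$ is a stabilizer state if and only if $U$ is a Clifford gate, so since $U$ is non-Clifford, $\ket{\Phi_U}\notin \stab_{2n}$. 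By the (strict) faithfulness part of the magic-measure axiom this forces $\MM(\ket{\Phi_U})>0$, and therefore
\[
\MM^{\cA}(U)\ \ge\ \MM\bigl((U\otimes \idop_{2^n})\ket{\Phi^+}\bigr)-\MM(\ket{\Phi^+})\ =\ \MM(\ket{\Phi_U})\ >\ 0.
\]

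The main subtlety, and the only place where care is needed, is the implicit use of strict faithfulness, namely that $\MM(\ket{\psi})=0$ forces $\ket{\psi}\in\stab_n$; the axiom as literally stated only asserts one direction, so I would either invoke strict faithfulness as part of the standing definition of a magic measure or phrase the lemma as applying to measures satisfying this strengthening (which holds for the $\alpha$-SRE and all other measures used in the paper). The remaining input, that non-Clifford unitaries have non-stabilizer Choi states, is a standard consequence of the fact that stabilizer pure states of $2n$ qubits maximally entangled across the bipartition are exactly the Choi states of Clifford unitaries.
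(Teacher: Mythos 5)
Your proof is correct and follows essentially the same route as the paper's: Clifford monotonicity (plus evaluation at a stabilizer input) gives the forward direction, and strict faithfulness combined with the stabilizer/Clifford correspondence gives the converse --- your use of the Choi state $(U\otimes\idop_{2^n})\ket{\Phi^+}$ as a single witness is just a concrete instance of the paper's argument that $U\otimes\idop_{2^m}$ must map stabilizer states to stabilizer states. Your observation that the converse direction needs the strict form of faithfulness ($\MM(\ket{\psi})=0\Rightarrow\ket{\psi}\in\stab_n$), which is not literally part of the stated axiom, is accurate; the paper's own proof relies on the same implicit strengthening.
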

\begin{proof}
    $\Leftarrow$: if $U$ is a Clifford gate, by the monotonicity of $\MM(\cdot)$, it follows that $\MM\left((U\ox \idop_{2^m})\ket{\phi}\right) - \MM\left(\ket{\phi}\right) \leq 0$ for any state $\ket{\phi}$ which yields $\MM^{\cA}(U)\leq 0$. As $\MM^{\cA}(U) \geq \MM((U\ox \idop_{2^m})\ket{0}) - \MM(\ket{0}) = 0$, we conclude $\MM^{\cA}(U) = 0$. 
    
    $\Rightarrow$: if $\MM^{\cA}(U)=0$, it follows that for any state $\ket{\phi}$, $\MM\left((U\ox \idop_{2^m})\ket{\phi}\right) - \MM\left(\ket{\phi}\right) \leq 0$. For any $\ket{\phi}\in\stab_n$, by the faithfulness of $\MM(\cdot)$, we have $\MM(\ket{\phi}) = 0$ and $\MM\left((U\ox \idop_{2^m})\ket{\phi}\right) = 0$. It follows that $(U\ox \idop_{2^m})\ket{\phi}$ is a stabilizer state. Thus, $U$ is a Clifford gate.
\end{proof}

\begin{lemma}[Subadditivity under composition]\label{lem:subadd_comp}
Let $\MM(\cdot)$ be a magic measure of quantum states. For any two unitaries $U$ and $V$,
\begin{equation}
    \MM^{\cA}(UV) \leq \MM^{\cA}(U) + \MM^{\cA}(V).
\end{equation}
\end{lemma}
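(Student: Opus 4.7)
The strategy is the standard telescoping trick used for amortized resource measures: I would insert the intermediate state $(V\otimes\idop_{2^m})\ket{\phi}$ between $\ket{\phi}$ and $(UV\otimes\idop_{2^m})\ket{\phi}$, bound each of the two resulting differences separately using the definitions of $\MM^{\cA}(U)$ and $\MM^{\cA}(V)$, and then pass to the supremum over $m$ and $\ket{\phi}$.

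\textbf{Key steps.} First, fix any $m\in\NN^+$ and any pure state $\ket{\phi}\in\cH_{n+m}$, where $n$ is the number of qubits on which $U$ and $V$ act. Write the telescoping identity
\begin{equation*}
\MM\!\bigl((UV\otimes\idop_{2^m})\ket{\phi}\bigr)-\MM(\ket{\phi})
=\Bigl[\MM\!\bigl((U\otimes\idop_{2^m})\ket{\psi}\bigr)-\MM(\ket{\psi})\Bigr]
+\Bigl[\MM\!\bigl((V\otimes\idop_{2^m})\ket{\phi}\bigr)-\MM(\ket{\phi})\Bigr],
\end{equation*}
where $\ket{\psi}\vcentcolon=(V\otimes\idop_{2^m})\ket{\phi}$ is itself a pure state in $\cH_{n+m}$. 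Second, observe that by the definition of $\MM^{\cA}(U)$, the first bracket is at most $\MM^{\cA}(U)$, since $\ket{\psi}$ is a particular pure state on $n+m$ qubits and the supremum in the definition runs over all such $m$ and all such pure inputs. Likewise the second bracket is at most $\MM^{\cA}(V)$ since $\ket{\phi}$ itself is a valid input for $V$. Hence
\begin{equation*}
\MM\!\bigl((UV\otimes\idop_{2^m})\ket{\phi}\bigr)-\MM(\ket{\phi})\leq \MM^{\cA}(U)+\MM^{\cA}(V).
\end{equation*}
Finally, taking the maximum over $\ket{\phi}\in\cH_{n+m}$ and the supremum over $m\in\NN^+$ on the left-hand side yields $\MM^{\cA}(UV)\leq \MM^{\cA}(U)+\MM^{\cA}(V)$.

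\textbf{Obstacles.} I do not anticipate a real obstacle: the proof is a one-line telescoping argument provided one is careful that $\ket{\psi}=(V\otimes\idop_{2^m})\ket{\phi}$ is still a pure state on the same $n+m$ qubits, so it is a legitimate competitor in the supremum defining $\MM^{\cA}(U)$. The only mildly delicate point is to note that the same ancilla size $m$ works for both $U$ and $V$ in the intermediate step, so no re-alignment of ancillae is needed; this is precisely why the definition takes a supremum over $m$ rather than fixing it. No properties of $\MM$ beyond being a real-valued function of pure states are needed for this step; faithfulness and Clifford monotonicity of $\MM$ are not used here (they are used only for faithfulness of $\MM^{\cA}$).
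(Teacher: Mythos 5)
Your proposal is correct and matches the paper's own argument: the paper likewise inserts the intermediate state $(V\otimes\idop_{2^m})\ket{\psi}$, telescopes the difference into two brackets, relaxes the constrained maximization to independent maximizations, and bounds each term by the corresponding amortized quantity. Your remark that the intermediate state lives on the same $n+m$ qubits and that no properties of $\MM$ beyond being real-valued are needed is accurate and consistent with the paper.
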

\begin{proof}
Consider that
\begin{equation}
\begin{aligned}
    \MM^{\cA}(UV) &= \sup_{m\in\NN^+}\max_{\ket{\psi}} \Big[\MM[(UV\ox \idop_{2^m})\ket{\psi}] - \MM(\ket{\psi})\Big]\\
    &= \sup_{m\in\NN^+}\max_{\ket{\psi}} \Big[\MM[(UV\ox \idop_{2^m})\ket{\psi}] - \MM((V\ox \idop_{2^m})\ket{\psi}) + \MM[(V\ox \idop_{2^m})\ket{\psi}] - \MM(\ket{\psi})\Big]\\
    &= \sup_{m\in\NN^+}\max_{\ket{\phi}=(V\ox \idop_R)\ket{\psi},\ket{\psi}} \Big[\MM[(U\ox \idop_{2^m})\ket{\phi}] - \MM(\ket{\phi}) + \MM[(V\ox \idop_{2^m})\ket{\psi}] - \MM(\ket{\psi})\Big]\\
    &\le \sup_{m\in\NN^+}\max_{\ket{\phi},\ket{\psi}} \Big[\MM[(U\ox \idop_{2^m})\ket{\phi}] - \MM(\ket{\phi}) + \MM[(V\ox \idop_{2^m})\ket{\psi}] - \MM(\ket{\psi})\Big]\\
    &\le \sup_{m\in\NN^+}\max_{\ket{\phi}} \Big[\MM[(U\ox \idop_{2^m})\ket{\phi}] - \MM(\ket{\phi})\Big] + \sup_{m\in\NN^+}\max_{\ket{\psi}}\MM\Big[(V\ox \idop_{2^m})\ket{\psi}] - \MM(\ket{\psi})\Big]\\
    &= \MM^{\cA}(U) + \MM^{\cA}(V).
\end{aligned}
\end{equation}
Hence, we complete the proof.
\end{proof}

\begin{lemma}[Invariance under Clifford]\label{lem:inv_clif}
Let $\MM(\cdot)$ be a magic measure of quantum states, and let $C_1$ be a Clifford operator. Then for any unitary $U$,
\begin{equation}
    \MM^{\cA}(C_1U) = \MM^{\cA}(U).
\end{equation}
\end{lemma}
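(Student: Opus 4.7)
The plan is to reduce the claim to the pointwise identity $\MM\bigl((C_1 U \ox \idop_{2^m})\ket{\phi}\bigr) = \MM\bigl((U \ox \idop_{2^m})\ket{\phi}\bigr)$ for every ancilla size $m \in \NN^+$ and every pure state $\ket{\phi} \in \cH_{n+m}$. Once this is established, the term $\MM(\ket{\phi})$ subtracted inside the supremum in the definition of $\MM^{\cA}$ is unaffected, so taking $\sup_m \max_{\ket{\phi}}$ on both sides immediately yields $\MM^{\cA}(C_1 U) = \MM^{\cA}(U)$.

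First I would observe that if $C_1$ is an $n$-qubit Clifford then $C_1 \ox \idop_{2^m}$ is an $(n+m)$-qubit Clifford, since tensoring with the identity preserves the Clifford property. Writing $C_1 U \ox \idop_{2^m} = (C_1 \ox \idop_{2^m})(U \ox \idop_{2^m})$, the monotonicity axiom of $\MM(\cdot)$, applied to the Clifford $C_1 \ox \idop_{2^m}$ acting on the state $(U \ox \idop_{2^m})\ket{\phi}$, gives $\MM\bigl((C_1 U \ox \idop_{2^m})\ket{\phi}\bigr) \leq \MM\bigl((U \ox \idop_{2^m})\ket{\phi}\bigr)$.

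For the reverse inequality I would exploit that the Clifford group is a group, so $C_1^{\dagger}$ is again a Clifford and hence so is $C_1^{\dagger} \ox \idop_{2^m}$. Applying monotonicity a second time, to $C_1^{\dagger} \ox \idop_{2^m}$ acting on $(C_1 U \ox \idop_{2^m})\ket{\phi}$, produces $\MM\bigl((U \ox \idop_{2^m})\ket{\phi}\bigr) \leq \MM\bigl((C_1 U \ox \idop_{2^m})\ket{\phi}\bigr)$. Combining the two bounds gives the desired pointwise equality, and substituting into the definition of $\MM^{\cA}$ concludes the argument.

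I do not expect any genuine obstacle here: the content of the lemma is really that the one-sided monotonicity axiom, together with the invertibility of Cliffords, upgrades to full Clifford invariance at the state level, and that this invariance propagates transparently to the amortized quantity through its definition. The one point worth stating explicitly, for the sake of unambiguous notation, is that $C_1$ in the statement acts on the same $n$ qubits as $U$, so that the factorization $(C_1 U) \ox \idop_{2^m} = (C_1 \ox \idop_{2^m})(U \ox \idop_{2^m})$ used above is valid. The symmetric right-invariance statement $\MM^{\cA}(U C_2) = \MM^{\cA}(U)$ could be recorded similarly, but it does not follow by the same one-line manipulation since $C_2$ acts on the input rather than the output; I would instead use the substitution $\ket{\phi} \mapsto (C_2^{\dagger} \ox \idop_{2^m})\ket{\phi}$ in the supremum, which is a bijection of pure states and preserves $\MM(\ket{\phi})$ by the same invariance argument.
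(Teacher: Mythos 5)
Your proof is correct and rests on the same key mechanism as the paper's: applying the monotonicity axiom with the Clifford $C_1\ox\idop_{2^m}$ in one direction and with its inverse $C_1^{\dagger}\ox\idop_{2^m}$ in the other. The only cosmetic difference is that you establish the full pointwise equality $\MM\bigl((C_1U\ox\idop_{2^m})\ket{\phi}\bigr)=\MM\bigl((U\ox\idop_{2^m})\ket{\phi}\bigr)$ and then pass to the supremum, whereas the paper obtains the upper bound $\MM^{\cA}(C_1U)\le\MM^{\cA}(U)$ by invoking its subadditivity-under-composition lemma together with $\MM^{\cA}(C_1)=0$; your route is marginally more self-contained but not substantively different.
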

\begin{proof}
By the subadditivity under composition in Lemma~\ref{lem:subadd_comp}, we have $\MM^{\cA}(C_1U) \leq 0 + \MM^{\cA}(U) = \MM^{\cA}(U)$. Consider that 
\begin{equation}
\begin{aligned}
    \MM^{\cA}(C_1U) &= \sup_{m\in\NN^+}\max_{\ket{\psi}} \Big[\MM[(C_1U \ox \idop_{2^m})\ket{\psi}] - \MM(\ket{\psi})\Big]\\
    &\geqt{(i)} \sup_{m\in\NN^+}\max_{\ket{\psi}} \Big[\MM[(C_1^\dag C_1U\ox \idop_{2^m})\ket{\psi}] - \MM(\ket{\psi})\Big]\\
    &= \sup_{m\in\NN^+}\max_{\ket{\psi}} \Big[\MM[(U\ox \idop_{2^m})\ket{\psi}] - \MM(\ket{\psi})\Big]\\
    &=\MM^{\cA}(U),
\end{aligned}
\end{equation}
where in (i) we used the fact that $C_1^{\dag}\ox \idop_{2^m}$ is also a Clifford operator and $\MM[(C_1^\dag C_1U\ox \idop_{2^m})\ket{\psi}]\leq \MM[(C_1U\ox \idop_{2^m})\ket{\psi}]$ by the monotonicity of $\MM(\cdot)$. Hence, we conclude that $\MM^{\cA}(C_1U) = \MM^{\cA}(U)$.
\end{proof}

\begin{lemma}[Subadditivity under tensor product]
Let $U$ and $V$ be an $n_1$-qubit and an $n_2$-qubit unitary, respectively. $\MM^{\cA}(U\ox V) \leq \MM^{\cA}(U) + \MM^{\cA}(V)$.
\end{lemma}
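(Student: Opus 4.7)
The plan is to reduce subadditivity under tensor product to the already proven subadditivity under composition (Lemma~\ref{lem:subadd_comp}). On the joint $(n_1+n_2)$-qubit system I would factor
\begin{equation*}
U\ox V=(U\ox\idop_{2^{n_2}})(\idop_{2^{n_1}}\ox V),
\end{equation*}
and apply Lemma~\ref{lem:subadd_comp} to this composition to obtain $\MM^{\cA}(U\ox V)\leq \MM^{\cA}(U\ox\idop_{2^{n_2}})+\MM^{\cA}(\idop_{2^{n_1}}\ox V)$. The remaining task is then to bound each of the two ``padded'' amortized magics in terms of $\MM^{\cA}(U)$ and $\MM^{\cA}(V)$, respectively.

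Next I would establish the auxiliary identity $\MM^{\cA}(W\ox\idop_{2^k})\leq \MM^{\cA}(W)$ for any unitary $W$ and any $k\geq 1$. The argument is a change of variables in the defining supremum: the $k$ trailing identity qubits of $W\ox\idop_{2^k}$ can be absorbed into the ancillary register of size $m$ to form a single ancilla of size $m'=k+m$ on which $W\ox\idop_{2^{m'}}$ acts. Consequently, the supremum defining $\MM^{\cA}(W\ox\idop_{2^k})$ is literally the restriction of the supremum defining $\MM^{\cA}(W)$ to ancilla sizes $m'\geq k+1$. Taking $W=U$ and $k=n_2$ disposes of the first term.

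To handle the second term $\MM^{\cA}(\idop_{2^{n_1}}\ox V)$, I would bring the identity block to the right using a qubit-permutation Clifford $\Pi$ (a product of SWAPs) satisfying $\Pi(V\ox\idop_{2^{n_1}})\Pi^{\dagger}=\idop_{2^{n_1}}\ox V$, and invoke Clifford invariance of $\MM^{\cA}$ from both sides to conclude $\MM^{\cA}(\idop_{2^{n_1}}\ox V)=\MM^{\cA}(V\ox\idop_{2^{n_1}})\leq \MM^{\cA}(V)$, where the last inequality reuses the padding identity with $W=V$ and $k=n_1$. The only place needing extra care is that Lemma~\ref{lem:inv_clif} records only \emph{left} Clifford invariance, so I would briefly note \emph{right} Clifford invariance as well: $\MM^{\cA}(UC)\leq \MM^{\cA}(U)+\MM^{\cA}(C)=\MM^{\cA}(U)$ by Lemma~\ref{lem:subadd_comp} and faithfulness, with the reverse inequality $\MM^{\cA}(U)=\MM^{\cA}(UCC^{\dagger})\leq \MM^{\cA}(UC)$ following the same way. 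Combining the three bounds yields $\MM^{\cA}(U\ox V)\leq \MM^{\cA}(U)+\MM^{\cA}(V)$. The main conceptual obstacle, modest as it is, lies in this conjugation-by-SWAP step: one must remember to establish Clifford invariance of $\MM^{\cA}$ on both sides before the rearrangement can be applied.
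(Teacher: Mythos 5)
Your proposal is correct and follows essentially the same route as the paper: factor $U\ox V=(U\ox\idop_{2^{n_2}})(\idop_{2^{n_1}}\ox V)$, apply subadditivity under composition, and then identify the two padded amortized magics with $\MM^{\cA}(U)$ and $\MM^{\cA}(V)$. The paper dismisses that last identification as following ``from the definition of amortized magic''; your absorption-of-identity-qubits-into-the-ancilla argument and the SWAP-conjugation step (with the observation that right Clifford invariance must be supplied separately) are exactly the details being elided there.
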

\begin{proof}
By Lemma~\ref{lem:subadd_comp}, we have
\begin{equation}
    \MM^{\cA}(U\ox V) \leq \MM^{\cA}(U\ox \idop_{2^{n_2}}) + \MM^{\cA}(\idop_{2^{n_1}}\ox V)= \MM^{\cA}(U) + \MM^{\cA}(V),
\end{equation}
where the second equality is from the definition of amortized magic. Hence, we complete the proof.
\end{proof}

In the following, we present the proof of Proposition~\ref{prop:SMA_meqn} stating that the strict amortized $\alpha$-stabilizer R\'enyi entropy of an $n$-qubit unitary can be achieved by considering an $n$-qubit ancillary system. This is directly given by the following lemma.
\begin{lemma}\label{lem:equal_dim}
    For a given $n$-qubit unitary U, there exists some state $\ket{\psi}_{AB'} \in \stab_{2n}$ such that for $m > n$ and any $\ket{\phi}_{AB} \in \stab_{n+m}$, it holds that:
    \begin{equation}
        M_{\alpha}[(U\ox \idop_{2^{n+m}})\ket{\phi}_{AB}] = M_{\alpha}[(U\ox \idop_{2^n})\ket{\psi}_{AB'}].
    \end{equation}
\end{lemma}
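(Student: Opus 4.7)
The plan is to invoke a one-sided canonical decomposition for bipartite pure stabilizer states, after which the claimed equality collapses to a short application of the Clifford-invariance, additivity, and faithfulness of $M_\alpha$. Concretely, I would first establish the following structural lemma: for any $\ket{\phi}_{AB}\in\stab_{n+m}$ with $|A|=n$ and $|B|=m>n$, there exist a bipartition $B=B'B''$ with $|B'|=n$ and $|B''|=m-n$, a Clifford $C_B$ on $B$, a stabilizer state $\ket{\psi}_{AB'}\in\stab_{2n}$, and a stabilizer state $\ket{s}_{B''}\in\stab_{m-n}$ such that
\[
\ket{\phi}_{AB}=(\idop_A\ox C_B)\bigl(\ket{\psi}_{AB'}\ox\ket{s}_{B''}\bigr).
\]
This follows from the standard two-sided canonical form for bipartite stabilizer states due to Fattal, Cubitt, Yamamoto, Bravyi and Chuang, which brings $\ket{\phi}_{AB}$ via local Cliffords $C_A\ox C_B$ into $k\le n$ EPR pairs tensored with $\ket{0}$ states on the remaining qubits; absorbing the inverse $A$-side Clifford $C_A^{-1}$ into the definition of $\ket{\psi}_{AB'}$ and regrouping $n-k$ of the unentangled $B$-qubits into $B'$ delivers the one-sided form above.

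Given the decomposition, the proof is a direct computation. Since $U$ acts only on $A$ while $\idop_A\ox C_B$ acts only on $B$, they commute, so
\[
(U\ox\idop_{2^m})\ket{\phi}_{AB}=(\idop_A\ox C_B)\bigl[(U\ox\idop_{2^n})\ket{\psi}_{AB'}\ox\ket{s}_{B''}\bigr].
\]
Applying $M_\alpha$ to both sides and using (i) its invariance under the Clifford $\idop_A\ox C_B$, (ii) its additivity under tensor products, and (iii) the faithfulness relation $M_\alpha(\ket{s}_{B''})=0$ for the stabilizer state $\ket{s}_{B''}$, I obtain
\[
M_\alpha\bigl[(U\ox\idop_{2^m})\ket{\phi}_{AB}\bigr]=M_\alpha\bigl[(U\ox\idop_{2^n})\ket{\psi}_{AB'}\bigr],
\]
which is the claimed equality. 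The state $\ket{\psi}_{AB'}$ is allowed to depend on $\ket{\phi}_{AB}$, which is precisely what the lemma asserts.

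The main technical obstacle is the one-sided canonical form itself. The cleanest route is to cite the bipartite canonical form of Fattal et al.\ and perform the absorption step described above. Alternatively, one can argue directly from the stabilizer tableau of $\ket{\phi}_{AB}$: by symplectic row-reduction restricted to the $B$-columns, one isolates $m-n$ independent stabilizer generators supported entirely on an $(m-n)$-qubit register $B''$, while the residual $2n$ generators stabilize a pure state $\ket{\psi}_{AB'}$ on $2n$ qubits after a suitable Clifford relabeling $C_B$ on $B$. Either route reduces the remainder of the proof to the three already-established properties of $M_\alpha$ listed above, so no further nontrivial input is needed.
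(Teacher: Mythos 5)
Your proposal is correct and follows essentially the same route as the paper's proof: both invoke the Fattal--Cubitt--Yamamoto--Bravyi--Chuang canonical form for bipartite stabilizer states, absorb the $A$-side Clifford into $\ket{\psi}_{AB'}$, split $B$ into an $n$-qubit $B'$ and an $(m-n)$-qubit $B''$ carrying a product stabilizer state, and then conclude via Clifford invariance, additivity, and faithfulness of $M_\alpha$. Your explicit remark that $\ket{\psi}_{AB'}$ depends on $\ket{\phi}_{AB}$ matches the intended (and used) reading of the lemma.
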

\begin{proof}
Suppose $m = n + t$, we consider a stabilizer state $\ket{\phi}_{AB}$ with the first $n$-qubit subsystem $A$ and last $n+t$ system $B$. It shows~\cite{fattal2004entanglement} that the state $\ket{\phi}_{AB}$ is local Clifford-equivalent to $q \leq n$ independent Bell pairs entangled across the subsystem $A$ and subsystem $B$. Then we have:
\begin{equation}
    \ket{\phi}_{AB} = (\idop_{2^n} \ox V_B) \ket{\psi}_{AB'}\ket{\tau}_{B''},
\end{equation}
where we further split subsystem $B$ into $n$-qubit subsystem $B'$ and $t$-qubit subsystem $B''$ with $q$ Bell states across $A$ and $B'$, and $V_B$ is a Clifford operation acting on subsystem $B$, $\ket{\psi}_{AB'} \in \stab_{2n}$ and $\ket{\tau}_{B''} \in \stab_{t}$. We have
\begin{subequations}
    \begin{align}
        M_{\alpha}[(U\ox \idop_{2^{n+t}})\ket{\phi}_{AB}]
        & = M_{\alpha}[(U\ox V_B)\ket{\psi}_{AB'}\ket{\tau}_{B''}]  \\
        & = M_{\alpha}[(\idop_{2^n} \ox V_B) (U\ox \idop_{2^{n+t}})\ket{\psi}_{AB'}\ket{\tau}_{B''}] \\
        & = M_{\alpha}[(U\ox \idop_{2^{n+t}})\ket{\psi}_{AB'}\ket{\tau}_{B''}] \label{eq: eq1} \\
        & = M_{\alpha}[(U\ox \idop_{2^{n}})\ket{\psi}_{AB'}] + M_{\alpha}(\ket{\tau}_{B''}) \label{eq: eq2}\\
         & = M_{\alpha}[(U\ox \idop_{2^n})\ket{\psi}_{AB'}],
    \end{align}
\end{subequations}
where Eq.~\eqref{eq: eq1} and Eq.~\eqref{eq: eq2} come from the fact that $\alpha$-stabilizer R\'enyi entropy is invariant under Clifford operations and its additivity property.
\end{proof}

\section{Appendix B: Amortized log robustness of magic and amortized stabilizer extent}\label{app:amor_rob_ext}

\begin{definition}[Log RoM~\cite{Seddon2019}]
The log robustness of magic of an $n$-qubit quantum state $\rho$ is defined as
\begin{equation}\label{def: ROM_states}
    \log_2 \cR(\rho) = \log_2 \min_{\Vec{q}}\Big\{ \| \Vec{q} \|_1 \ : \  \sum_i q_i \ketbra{s_i}{s_i} = \rho \ , \ketbra{s_i}{s_i} \in \stab_n \Big\},
\end{equation}
where $\| \Vec{q} \|_1 = \sum_i |q_i|$ denotes the $\ell_1$-norm, and the minimization ranges over possible decompositions of all pure stabilizer states.
\end{definition}

\begin{definition}[Amortized log RoM]
The amortized log robustness of magic of an $n$-qubit unitary $U$ is defined as
\begin{equation}
    \cR_{\log}^{\cA}(U):= \sup_{m\in \NN^+}\max_{\ket{\phi}\in \cH_{n+m}} \log_2\cR\left[(U\ox \idop_{2^m})\ket{\phi}\right] - \log_2\cR(\ket{\phi}),
\end{equation}
where the maximization is with respect to all pure states in Hilbert space $\cH_{n+m}$. The strict amortized log robustness of magic of $U$ is defined as
\begin{equation}
    \widetilde{\cR}_{\log}^{\cA}(U):= \sup_{m\in \NN^+}\max_{\ket{\phi}\in \stab_{n+m}} \log_2\cR[(U\ox \idop_{2^m})\ket{\phi}].
\end{equation}
\end{definition}

\begin{proposition}\label{pro:str_amo_log_RoM}
For any $n$-qubit unitary $U$, $\cR_{\log}^{\cA}(U) = \widetilde{\cR}_{\log}^{\cA}(U)$.
\end{proposition}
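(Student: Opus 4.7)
The plan is to prove the two inequalities $\cR_{\log}^{\cA}(U) \geq \widetilde{\cR}_{\log}^{\cA}(U)$ and $\cR_{\log}^{\cA}(U) \leq \widetilde{\cR}_{\log}^{\cA}(U)$ separately, with the upper bound being the substantive direction. The lower bound direction is immediate: for any pure stabilizer state $\ket{\phi} \in \stab_{n+m}$ the decomposition $\ketbra{\phi}{\phi} = 1 \cdot \ketbra{\phi}{\phi}$ witnesses $\cR(\ket{\phi}) = 1$, so $\log_2 \cR(\ket{\phi}) = 0$. Hence the increment $\log_2 \cR[(U \ox \idop_{2^m})\ket{\phi}] - \log_2 \cR(\ket{\phi})$ evaluated on stabilizer inputs coincides with $\log_2 \cR[(U \ox \idop_{2^m})\ket{\phi}]$, and restricting the maximization in $\cR_{\log}^{\cA}(U)$ to the subset $\stab_{n+m} \subset \cH_{n+m}$ trivially yields a value that is no larger than $\cR_{\log}^{\cA}(U)$.

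For the upper bound I would exploit the fact that $\cR$ is defined via a minimum over signed stabilizer decompositions whose $\ell_1$-norms behave multiplicatively under composition. Fix any pure $\ket{\phi} \in \cH_{n+m}$ and an optimal decomposition $\ketbra{\phi}{\phi} = \sum_i q_i \ketbra{s_i}{s_i}$ with $\sum_i |q_i| = \cR(\ket{\phi})$ and each $\ket{s_i} \in \stab_{n+m}$. Conjugating by $U \ox \idop_{2^m}$ gives
\begin{equation}
(U \ox \idop_{2^m}) \ketbra{\phi}{\phi} (U \ox \idop_{2^m})^{\dagger} = \sum_i q_i\, (U \ox \idop_{2^m}) \ketbra{s_i}{s_i} (U \ox \idop_{2^m})^{\dagger}.
\end{equation}
For each $i$, since $\ket{s_i}$ is a pure stabilizer state on $n+m$ qubits, the definition of $\widetilde{\cR}_{\log}^{\cA}(U)$ (together with the supremum over ancilla sizes) guarantees an optimal stabilizer decomposition $(U \ox \idop_{2^m}) \ketbra{s_i}{s_i} (U \ox \idop_{2^m})^{\dagger} = \sum_j r_{ij} \ketbra{t_{ij}}{t_{ij}}$ with $\sum_j |r_{ij}| \leq 2^{\widetilde{\cR}_{\log}^{\cA}(U)}$. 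Substituting produces a valid stabilizer decomposition of $(U \ox \idop_{2^m}) \ketbra{\phi}{\phi} (U \ox \idop_{2^m})^{\dagger}$ whose total $\ell_1$-norm is bounded via the triangle inequality by $\sum_i |q_i| \sum_j |r_{ij}| \leq \cR(\ket{\phi}) \cdot 2^{\widetilde{\cR}_{\log}^{\cA}(U)}$. Taking base-$2$ logarithms and rearranging gives $\log_2 \cR[(U \ox \idop_{2^m})\ket{\phi}] - \log_2 \cR(\ket{\phi}) \leq \widetilde{\cR}_{\log}^{\cA}(U)$, and taking the supremum over $m$ and the maximum over $\ket{\phi}$ yields $\cR_{\log}^{\cA}(U) \leq \widetilde{\cR}_{\log}^{\cA}(U)$.

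The main obstacle is just the careful bookkeeping of the two-level decomposition and the triangle inequality for signed (possibly complex) coefficients; once one observes that $\|(q_i r_{ij})_{i,j}\|_1 = \|(q_i)_i\|_1 \cdot \|(r_{ij})_j\|_1$ coordinate-wise bounds the $\ell_1$-norm of the composed decomposition, the argument closes. A minor technical subtlety is that the inner decomposition of $(U \ox \idop_{2^m}) \ket{s_i}$ may require a stabilizer basis on $n+m$ qubits rather than the fewer qubits originally appearing in the definition of $\widetilde{\cR}_{\log}^{\cA}(U)$; this is precisely why the supremum over $m \in \NN^+$ in the definition of the strict amortized quantity is needed, and it ensures that the same upper bound $2^{\widetilde{\cR}_{\log}^{\cA}(U)}$ applies uniformly. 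Combining both directions gives $\cR_{\log}^{\cA}(U) = \widetilde{\cR}_{\log}^{\cA}(U)$, confirming that the log robustness of magic exhibits no nonstabilizerness dependence, in sharp contrast to the $\alpha$-stabilizer R\'enyi entropy.
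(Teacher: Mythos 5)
Your proof is correct and follows essentially the same route as the paper's: both reduce the statement to the submultiplicativity bound $\cR[(U\ox\idop_{2^m})\ket{\phi}] \le \cR(\ket{\phi})\cdot\max_{\ket{s}\in\stab_{n+m}}\cR[(U\ox\idop_{2^m})\ket{s}]$, together with the trivial observation that $\cR=1$ on stabilizer inputs. The only difference is that the paper imports this inequality from Seddon et al., whereas you derive it directly by composing the two levels of optimal stabilizer decompositions and bounding the $\ell_1$-norm of the product coefficients, which makes the argument self-contained (a negligible imprecision: the coefficients $q_i$ in the robustness decomposition are real, not complex).
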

\begin{proof}
From the definitions, we directly have $\cR_{\log}^{\cA}(U) \geq \widetilde{\cR}_{\log}^{\cA}(U)$. Followed from the result in Ref.~\cite{Seddon2019} that for any $n$-qubit given quantum channel $\cE$,
\begin{equation}
    \frac{\cR[(\cE \ox \idop_{2^m})\rho]}{\cR(\rho)} \leq 
    \max_{\ket{\psi}\in \stab_{n+m}}\cR[(\cE \ox \idop_{2^m})\ketbra{\psi}{\psi}] = 
    \max_{\ket{\psi}\in \stab_{2n}}\cR[(\cE \ox \idop_{2^n})\ketbra{\psi}{\psi}], \ \forall \rho,
\end{equation}
where $m \geq n$. When considering an $n$-qubit unitary $U$, we suppose that $\cR_{\log}^{\cA}(U)$ is achieved by an $m^*$-qubit ancillary system and an $(n+m^*)$-qubit input state $\ket{\phi^*}$. We have $\log_2\cR\left[(U\ox \idop_{2^{m^*}})\ket{\phi^*}\right] - \log_2\cR(\ket{\phi^*}) = \cR_{\log}^{\cA}(U) \leq \widetilde{\cR}_{\log}^{\cA}(U)$. Thus, we complete the proof.
\end{proof}

\begin{definition}[Stabilizer extent~\cite{Bravyi_2019}]
    \label{def: stab_extent}
For any $n$-qubit pure state $\ket{\psi}$, the stabilizer extent $\xi(\ket{\psi})$ is defined as the minimum of $\| \vec{c} \|^2_1$ over all stabilizer decompositions $ \ket{\psi}=\sum_{\alpha=1}^k c_\alpha \ket{\phi_\alpha}$
, where $\ket{\phi_\alpha}$ are normalized pure stabilizer states. 
\end{definition}

\begin{definition}[Amortized log stabilizer extent]
The amortized log stabilizer extent of an $n$-qubit unitary $U$ is defined as
\begin{equation}
    \xi_{\log}^{\cA}(U):= \sup_{m\in \NN^+}\max_{\ket{\phi}\in \cH_{n+m}} \log_2\xi\left[(U\ox \idop_{2^m})\ket{\phi}\right] - \log_2\xi(\ket{\phi}),
\end{equation}
where the maximization is with respect to all pure states in Hilbert space $\cH_{n+m}$. The strict amortized log stabilizer extent of $U$ is defined as
\begin{equation}
    \widetilde{\xi}_{\log}^{\cA}(U):= \sup_{m\in \NN^+}\max_{\ket{\phi}\in \stab_{n+m}} \log_2\xi[(U\ox \idop_{2^m})\ket{\phi}].
\end{equation}
\end{definition}

Analogous to Lemma~\ref{lem:equal_dim} for the restricted amortized $\alpha$-stabilizer R\'enyi entropy, we show that for the strict amortized log stabilizer extent, it suffices to tensor the input unitary with an identity operation of the same dimension, i.e., $\widetilde{\xi}_{\log}^{\cA}(U) = \max_{\ket{\phi}\in \stab_{2n}} \log_2 \xi[(U\ox \idop_{2^n})\ket{\phi}]$. This also follows from the subadditivity property and invariance under Clifford operations of the log stabilizer extent.

\begin{proposition}\label{pro:str_amo_stab_extent}
For any $n$-qubit unitary $U$, $\xi_{\log}^{\cA}(U) = \widetilde{\xi}_{\log}^{\cA}(U)$.
\end{proposition}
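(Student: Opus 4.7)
The plan is to establish the two inequalities $\xi_{\log}^{\cA}(U) \geq \widetilde{\xi}_{\log}^{\cA}(U)$ and $\xi_{\log}^{\cA}(U) \leq \widetilde{\xi}_{\log}^{\cA}(U)$ separately, closely mirroring the strategy used for the amortized log robustness in Proposition~\ref{pro:str_amo_log_RoM}.

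The direction $\xi_{\log}^{\cA}(U) \geq \widetilde{\xi}_{\log}^{\cA}(U)$ is immediate: every pure stabilizer state $\ket{\phi} \in \stab_{n+m}$ satisfies $\xi(\ket{\phi}) = 1$, so $\log_2 \xi(\ket{\phi}) = 0$, and the integrand of $\widetilde{\xi}_{\log}^{\cA}(U)$ appears verbatim as a special case of the integrand of $\xi_{\log}^{\cA}(U)$. Taking suprema over $m$ and maxima over the larger admissible set on the right gives the inequality.

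For the nontrivial direction $\xi_{\log}^{\cA}(U) \leq \widetilde{\xi}_{\log}^{\cA}(U)$, I would establish a submultiplicativity inequality by composing two stabilizer decompositions. Fix any pure $\ket{\phi} \in \cH_{n+m}$ and choose an optimal stabilizer decomposition $\ket{\phi} = \sum_\alpha c_\alpha \ket{\phi_\alpha}$ with $\|\vec c\|_1^2 = \xi(\ket{\phi})$, each $\ket{\phi_\alpha}$ being a normalized pure stabilizer state in $\cH_{n+m}$. For each $\alpha$, pick an optimal stabilizer decomposition $(U\ox \idop_{2^m})\ket{\phi_\alpha} = \sum_\beta d_{\alpha,\beta}\ket{\psi_{\alpha,\beta}}$ with $\sum_\beta |d_{\alpha,\beta}| = \sqrt{\xi[(U\ox \idop_{2^m})\ket{\phi_\alpha}]}$. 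Substituting yields a valid (but generally suboptimal) stabilizer decomposition of $(U\ox \idop_{2^m})\ket{\phi}$, whence by the $\ell_1$ triangle inequality
\begin{equation}
\sqrt{\xi\bigl[(U\ox \idop_{2^m})\ket{\phi}\bigr]} \;\leq\; \sum_\alpha |c_\alpha| \sqrt{\xi\bigl[(U\ox \idop_{2^m})\ket{\phi_\alpha}\bigr]} \;\leq\; \|\vec c\|_1 \cdot \max_{\alpha} \sqrt{\xi\bigl[(U\ox \idop_{2^m})\ket{\phi_\alpha}\bigr]}.
\end{equation}
Since each $\ket{\phi_\alpha} \in \stab_{n+m}$, the quantity $\log_2 \xi[(U\ox \idop_{2^m})\ket{\phi_\alpha}]$ is admissible in the definition of $\widetilde{\xi}_{\log}^{\cA}(U)$ and hence bounded by it. Squaring, taking $\log_2$, and rearranging gives $\log_2 \xi[(U\ox \idop_{2^m})\ket{\phi}] - \log_2 \xi(\ket{\phi}) \leq \widetilde{\xi}_{\log}^{\cA}(U)$, and taking the sup over $m$ and max over $\ket{\phi}$ completes the proof.

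The main obstacle is just the bookkeeping in the above decomposition argument: one must verify that the substituted decomposition of $(U\ox \idop_{2^m})\ket{\phi}$ is genuinely a stabilizer decomposition (which is clear from linearity), and that $(U\ox \idop_{2^m})\ket{\phi_\alpha}$ legitimately enters the maximization defining $\widetilde{\xi}_{\log}^{\cA}(U)$ at ancilla size $m$, so that the supremum over $m$ on the right-hand side dominates each term. Notably, this argument uses only the one-sided combinatorial submultiplicativity of $\xi$ coming from $\ell_1$-triangle inequality and does \emph{not} require the (conjectural) multiplicativity of the stabilizer extent under tensor products. Under the hint in the main text, this corresponds exactly to the subadditivity of $\log_2 \xi$ under superposition together with its invariance under Clifford operations on the ancillary system.
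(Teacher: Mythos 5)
Your proof is correct and follows essentially the same route as the paper: the trivial direction from $\xi(\ket{\phi})=1$ on stabilizer states, and the nontrivial direction by composing the optimal decomposition of $\ket{\phi}$ with optimal decompositions of each $(U\ox\idop_{2^m})\ket{\phi_\alpha}$ and applying the $\ell_1$ triangle inequality to get submultiplicativity of $\xi$. If anything, you are more careful than the paper about the square in Definition~\ref{def: stab_extent} (the paper writes $\xi$ as $\|\vec c\|_1$ rather than $\|\vec c\|_1^2$ inside its proof, which is harmless but notationally inconsistent), and your bound via $\max_\alpha$ over the admissible stabilizer inputs is an equally valid closing step to the paper's appeal to the optimal $2n$-qubit stabilizer state.
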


\begin{proof} 
From the definitions, we directly have $\xi_{\log}^{\cA}(U) \geq \widetilde{\xi}_{\log}^{\cA}(U)$. It suffices to show that $\xi_{\log}^{\cA}(U) \leq \widetilde{\xi}_{\log}^{\cA}(U)$. For any fixed $(n+m)$-qubit state $\ket{\psi}$, suppose the optimal stabilizer state decomposition regarding to stabilizer extent is $\ket{\psi} = \sum_j c_j \ket{\phi_j}$, where $\log_2 \xi(\ket{\psi}) = \log_2 (\sum_j |c_j|)$. It follows that
\begin{equation}\label{eq:stab_extent_decomposition}
    (U \otimes \idop_{2^m}) \ket{\psi} = \sum_j c_j (U \otimes \idop_{2^m})\ket{\phi_j} =  \sum_{i,j} c_j \beta_{i,j} \ket{\phi_{i,j}},
\end{equation}
where $\xi[(U \otimes \idop_{2^m}) \ket{\phi_j}] = \sum_i | \beta_{i,j}|$. 
Note that Eq.~\eqref{eq:stab_extent_decomposition} is a feasible decomposition of $(U \otimes \idop_{2^m}) \ket{\psi}$ but not optimal. It follows that
\begin{subequations}
\begin{align}
    \xi[(U \otimes \idop_{2^m}) \ket{\psi}] &\leq \sum_{i,j} |c_j \beta_{i,j}| \\
    & = \sum_{j} |c_j| \sum_i |\beta_{i,j}| \\
    & = \sum_{j} |c_j| \xi[(U \otimes \idop_{2^m}) \ket{\phi_j}] \\
    & \leq \xi(\ket{\psi}) \xi[(U \otimes \idop_{2^n}) \ket{\phi^*}]\label{eq:amo_stab_extent},
\end{align}
\end{subequations}
where $\ket{\phi^*}$ is the optimal $2n$-qubit pure stabilizer state enabling $\widetilde{\xi}_{\log}^{\cA}(U)= \log_2 \xi[(U \otimes \idop_{2^n}) \ket{\phi^*}]$. 
Thus, after taking the logarithm on both sides of Eq.~\eqref{eq:amo_stab_extent}, it follows that for any $(n+m)$-qubit pure state $\ket{\psi}$,
\begin{equation}
    \log_2 \xi[(U \otimes \idop_{2^m}) \ket{\psi}] - \log_2 \xi(\ket{\psi}) \leq \widetilde{\xi}_{\log}^{\cA}(U),
\end{equation}
which yields $\xi_{\log}^{\cA}(U) \leq \widetilde{\xi}_{\log}^{\cA}(U)$ and completes the proof.
\end{proof}

\begin{lemma}
    For a single-qubit T gate, $\xi_{\log}^{\cA}(T)$ =  $(\sec \pi/8)^2$.
\end{lemma}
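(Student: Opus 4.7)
The plan is to combine Proposition~\ref{pro:str_amo_stab_extent} with the dimension-reduction remark immediately preceding it: together these yield
\[
\xi_{\log}^{\cA}(T) = \widetilde{\xi}_{\log}^{\cA}(T) = \max_{\ket{\phi}\in\stab_{2}} \log_2 \xi\bigl[(T\ox \idop_2)\ket{\phi}\bigr],
\]
so the task reduces to maximizing $\xi[(T\ox \idop_2)\ket{\phi}]$ over the finite set of two-qubit pure stabilizer states. I would then invoke the structural result of~\cite{fattal2004entanglement}: every two-qubit pure stabilizer state is local-Clifford equivalent either to a product of single-qubit stabilizer states, or to the Bell pair $\ket{\Phi^+}$. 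Because $\xi$ is Clifford-invariant and $T\ox \idop_2$ commutes with any Clifford $\idop_2\ox C_B$ acting on subsystem $B$, the outer $B$-side Clifford can be peeled off without changing the extent, reducing matters to computing $\xi[(TC_A\ox \idop_2)\ket{\phi_0}]$ for a single-qubit Clifford $C_A$ and a canonical $\ket{\phi_0}\in\{\ket{0}\ket{0},\,\ket{\Phi^+}\}$.

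In the product case $\ket{\phi_0}=\ket{0}\ket{0}$, multiplicativity of $\xi$ when tensoring by a stabilizer state gives $\xi[(TC_A\ox \idop_2)\ket{\phi_0}] = \xi(TC_A\ket{0})$; maximizing over the six single-qubit stabilizer states $C_A\ket{0}$ yields $\xi(T\ket{+}) = \sec^2(\pi/8)$, attained for instance by $C_A=H$, while $C_A\in\{\idop_2,X\}$ give $\xi=1$ because $T$ only rephases computational-basis states. In the Bell case $\ket{\phi_0}=\ket{\Phi^+}$, I would exploit the transpose trick $(V\ox \idop_2)\ket{\Phi^+} = (\idop_2\ox V^T)\ket{\Phi^+}$ with $V = TC_A$: since $T^T=T$ (diagonal) and the Clifford group is closed under transpose (its generators $H,S,\mathrm{CNOT}$ all satisfy $G^T$ Clifford), the resulting $\idop_2\ox C_A^T$ factor is Clifford and drops out, leaving $\xi((\idop_2\ox T)\ket{\Phi^+})$, which equals $\xi((T\ox \idop_2)\ket{\Phi^+})$ by a SWAP. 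A direct computation then verifies $\mathrm{CNOT}_{A\to B}\bigl(T\ket{+}\ox\ket{0}\bigr) = (T\ox\idop_2)\ket{\Phi^+}$, so by Clifford-invariance and multiplicativity $\xi((T\ox\idop_2)\ket{\Phi^+}) = \xi(T\ket{+})\cdot\xi(\ket{0}) = \sec^2(\pi/8)$.

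Combining the two cases shows the maximum extent is exactly $\sec^2(\pi/8)$, attained at the input $\ket{+}\ox\ket{0}$, and taking logarithms yields $\xi_{\log}^{\cA}(T) = \log_2 \sec^2(\pi/8)$ as claimed. The main obstacle I anticipate lies in the two auxiliary ingredients that make the above reduction clean: first, the multiplicativity of $\xi$ in this low-qubit regime, which follows from known results of~\cite{Bravyi_2019}; and second, the explicit evaluation $\xi(T\ket{+}) = \sec^2(\pi/8)$, which is obtained by exhibiting an optimal two-term stabilizer decomposition $T\ket{+} = \alpha\ket{+} + \beta\ket{+i}$ with $|\alpha|+|\beta| = \sec(\pi/8)$ and matching it with the corresponding lower bound via dual feasibility of the stabilizer-extent SDP.
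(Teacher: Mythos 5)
Your proposal is correct and, in fact, considerably more complete than the paper's own proof, which consists of exactly two sentences: invoke Proposition~\ref{pro:str_amo_stab_extent} to get $\xi_{\log}^{\cA}(T)=\widetilde{\xi}_{\log}^{\cA}(T)$, and then ``compute $\widetilde{\xi}_{\log}^{\cA}(T)$ directly by maximizing over all $2$-qubit pure stabilizer states,'' i.e.\ a brute-force enumeration over the $60$ such states. You perform the same initial reduction (Proposition~\ref{pro:str_amo_stab_extent} plus the dimension bound $m=n=1$), but then replace the enumeration with a structural argument: the local-Clifford classification of two-qubit stabilizer states into product states and the Bell pair, Clifford invariance to peel off the $B$-side Clifford, the transpose/CNOT tricks to collapse the Bell case onto the product case, and the known value $\xi(T\ket{+})=\sec^2(\pi/8)$ with its optimal two-term decomposition $\alpha\ket{+}+\beta\ket{+i}$, $|\alpha|+|\beta|=2\sqrt{2}\sin(\pi/8)=\sec(\pi/8)$. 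All of these steps check out, and the one place where care is genuinely needed --- the multiplicativity $\xi(\psi\ox\phi)=\xi(\psi)\xi(\phi)$, which fails for general states but holds when each factor has at most three qubits --- you correctly attribute to~\cite{Bravyi_2019}. What your route buys is an analytic, human-checkable proof in place of a computer search; what the paper's route buys is brevity. One small discrepancy worth flagging: the lemma as printed asserts $\xi_{\log}^{\cA}(T)=(\sec\pi/8)^2$, whereas the quantity $\xi_{\log}^{\cA}$ is defined with a $\log_2$, so the value should read $\log_2\sec^2(\pi/8)$ --- which is exactly what your argument delivers; the statement in the paper appears to conflate $\xi^{\cA}$ with $\xi^{\cA}_{\log}$.
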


\begin{proof}
By Proposition~\ref{pro:str_amo_stab_extent}, we have $\xi_{\log}^{\cA}(T) = \widetilde{\xi}_{\log}^{\cA}(T)$. Then we can compute $\widetilde{\xi}_{\log}^{\cA}(T)$ directly by maximizing over all $2$-qubit pure stabilizer states. 
\end{proof}

\section{Appendix C: Proof of Theorem~\ref{thm:amor_magic_T}}\label{app:amor_magic_T}

\begin{lemma}\label{lem:P4>=0}
Both 
\begin{equation}
    \sum_{P\in\cP_{1}}P^{\ox4}\text{ and }\sum_{P\in\cP_{1}}(-1)^{\delta_{P\in\{X,Y\}}}P^{\ox4}
\end{equation}
are positive semi-definite.
\end{lemma}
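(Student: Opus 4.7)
The plan is to diagonalize both operators simultaneously by exploiting the group structure of $\{I^{\ox4},X^{\ox4},Y^{\ox4},Z^{\ox4}\}$. First I would verify that these four operators pairwise commute: the only non-trivial case is $[P^{\ox4},Q^{\ox4}]$ with $P\ne Q$ single-qubit Paulis, and since $PQ$ and $QP$ differ only by a phase in $\{\pm i\}$, raising to the fourth power yields $i^4=(-i)^4=1$ so $(PQ)^{\ox4}=(QP)^{\ox4}$. In particular all four $P^{\ox4}$ commute and each squares to $\idop_{2^4}$, so each has eigenvalues in $\{\pm1\}$.

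Next I would identify the abelian group they generate. Using $XYZ=i\idop_2$ one obtains
\begin{equation}
X^{\ox4}\,Y^{\ox4}\,Z^{\ox4}=(XYZ)^{\ox4}=i^{4}\idop_{2^4}=\idop_{2^4},
\end{equation}
so the group is isomorphic to $\ZZ_2\times\ZZ_2$ with the single relation $X^{\ox4}Y^{\ox4}Z^{\ox4}=\idop_{2^4}$. Consequently $\cH_4$ admits an orthogonal decomposition into joint eigenspaces labeled by triples $(x,y,z)\in\{\pm1\}^{3}$ with $xyz=1$, namely the four sign patterns $(1,1,1)$, $(1,-1,-1)$, $(-1,1,-1)$, $(-1,-1,1)$.

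I would then just read off the eigenvalues of the two operators on each joint eigenspace. For the first operator the eigenvalues are $1+x+y+z$, which evaluates to $4$ on $(1,1,1)$ and to $0$ on the other three; for the second the eigenvalues are $1-x-y+z$, which is $4$ on $(-1,-1,1)$ and $0$ elsewhere. In both cases every eigenvalue is non-negative (in fact each operator equals $4$ times an orthogonal projector onto a single joint eigenspace), so positive semi-definiteness follows.

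There is no real obstacle here: the only mild subtlety is making sure the commutation and group relation use $i^{4}=1$, which kills all the $\pm i$ phases that would otherwise appear in products of Paulis and which is precisely why the statement involves fourth powers. Everything else reduces to enumerating four sign patterns.
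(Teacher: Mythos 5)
Your proof is correct, but it takes a genuinely different route from the paper's. The paper proves the lemma by brute force: it writes $\frac12\sum_{P\in\cP_1}(\pm1)^{\delta_{P\in\{X,Y\}}}P^{\ox4}$ explicitly as a sum of four rank-one terms of the form $(\ket{abcd}\pm\ket{\bar a\bar b\bar c\bar d})(\bra{abcd}\pm\bra{\bar a\bar b\bar c\bar d})$, which is manifestly positive semi-definite. You instead observe that $\{I^{\ox4},X^{\ox4},Y^{\ox4},Z^{\ox4}\}$ is an abelian group of involutions isomorphic to $\ZZ_2\times\ZZ_2$ (the $\pm i$ phases in Pauli products being killed by the fourth power, and $X^{\ox4}Y^{\ox4}Z^{\ox4}=(i\idop_2)^{\ox4}=\idop_{2^4}$), simultaneously diagonalize, and note that on every admissible joint eigenspace the character sums $1+x+y+z$ and $1-x-y+z$ are $0$ or $4$. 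All steps check out, including the constraint $xyz=1$ that rules out the negative sign patterns. What each approach buys: the paper's computation is concrete and doubles as an explicit description of the image of the projector, which is implicitly reused in the $CCZ$ analysis; your argument is more conceptual, immediately shows each operator is $4$ times an orthogonal projector, and generalizes with no extra work to $\sum_{P\in\cP_n}P^{\ox4}$ and to signed sums such as the one in Lemma~\ref{lem:P4096>=0} (where the relevant fourth tensor powers still commute by the same phase-killing argument), for which the paper omits the proof.
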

\begin{proof}
It is checked that
\begin{align}
    &\frac12\sum_{P\in\cP_{1}}(\pm1)^{\delta_{P\in\{X,Y\}}}P^{\ox4}\\
    =&\; (\ket{0000}\pm\ket{1111})(\bra{0000}\pm\bra{1111})
    +(\ket{0011}\pm\ket{1100})(\bra{0011}\pm\bra{1100})+\\
    &\; (\ket{0101}\pm\ket{1010})(\bra{0101}\pm\bra{1010})
    +(\ket{0110}\pm\ket{1001})(\bra{0110}\pm\bra{1001})\succeq0.
\end{align}
\end{proof}

For any $\rho\in\cL(\cH_{m})$, denote
\begin{equation}
    R_\alpha(\rho)\coloneqq\sum_{P\in\cP_m}\left|\tr[\rho P]\right|^{2\alpha},
\end{equation}
and we have
\begin{lemma}\label{lem:IZgeXY}
For any $\ket{\psi}\in\cH_{{m+1}}$,
\begin{equation}
    \sum_{P\in\cP_{1}}(-1)^{\delta_{P\in\{X,Y\}}}\sum_{P'\in\cP_{m}}\bra\psi(P\ox P')\ket\psi^4\ge0,
\end{equation}
or equally
\begin{equation}
    R_2(\ketbra{\psi}{\psi}+(Z\ox\idop_{2^m})\ketbra{\psi}{\psi}(Z\ox\idop_{2^m}))\ge R_2(\ketbra{\psi}{\psi}-(Z\ox\idop_{2^m})\ketbra{\psi}{\psi}(Z\ox\idop_{2^m})).
\end{equation}
\end{lemma}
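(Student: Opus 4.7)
The plan is to apply Lemma~\ref{lem:P4>=0} by reformulating each scalar $\bra\psi(P\ox P')\ket\psi^4$ as the trace of a tensor-power operator acting on four copies of $\ket\psi$. Since $P\ox P'$ is Hermitian, $\bra\psi(P\ox P')\ket\psi$ is real, so the identity $\prod_i\tr(A_i)=\tr(\bigotimes_i A_i)$ gives
\begin{equation*}
\bra\psi(P\ox P')\ket\psi^4 = \tr\!\bigl[(P\ox P')^{\ox 4}(\ketbra{\psi}{\psi})^{\ox 4}\bigr].
\end{equation*}
Summing against the signs $(-1)^{\delta_{P\in\{X,Y\}}}$ and over $P'$, and permuting the eight tensor factors of $(\cH_1\ox\cH_m)^{\ox 4}$ so that the four $\cH_1$ factors are grouped together and the four $\cH_m$ factors are grouped together, I would rewrite the left-hand side as
\begin{equation*}
\mathrm{LHS} = \tr\!\bigl[(M_1\ox M_m)\,\tilde\sigma\bigr],
\end{equation*}
where $M_1 := \sum_{P\in\cP_1}(-1)^{\delta_{P\in\{X,Y\}}}P^{\ox 4}$, $M_m := \sum_{P'\in\cP_m}(P')^{\ox 4}$, and $\tilde\sigma$ is the rearrangement of $(\ketbra{\psi}{\psi})^{\ox 4}$ under the same permutation. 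Unitary conjugation preserves positivity, so $\tilde\sigma\succeq 0$.

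Next I would verify that both Pauli sums are positive semi-definite. Positivity of $M_1$ is exactly the signed statement of Lemma~\ref{lem:P4>=0}. For $M_m$, decompose each $P'\in\cP_m$ as $P'_1\ox\cdots\ox P'_m$ with $P'_i\in\cP_1$ and apply an analogous factor regrouping to obtain $M_m\cong\bigotimes_{i=1}^m\sum_{P'_i\in\cP_1}(P'_i)^{\ox 4}$; each factor is PSD by the unsigned case of Lemma~\ref{lem:P4>=0}, so $M_m\succeq 0$. Hence $M_1\ox M_m\succeq 0$, and since the trace of a product of two PSD operators is non-negative, $\mathrm{LHS}\ge 0$.

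Finally, the equivalent $R_2$ formulation follows from a short direct computation. Writing $\rho_\pm := \ketbra{\psi}{\psi}\pm(Z\ox\idop_{2^m})\ketbra{\psi}{\psi}(Z\ox\idop_{2^m})$ and using $ZPZ = +P$ for $P\in\{I,Z\}$ and $ZPZ = -P$ for $P\in\{X,Y\}$, one checks that $\tr[\rho_+(P\ox P')] = 2\bra\psi(P\ox P')\ket\psi$ for $P\in\{I,Z\}$ and $0$ for $P\in\{X,Y\}$, with the roles reversed for $\rho_-$. Therefore $R_2(\rho_+)-R_2(\rho_-) = 16\,\mathrm{LHS}$, so the nonnegativity of the LHS gives the displayed $R_2$ inequality.

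The main obstacle I anticipate is the tensor-factor bookkeeping: one must simultaneously permute the eight factors of $(\cH_1\ox\cH_m)^{\ox 4}$ so that $(P\ox P')^{\ox 4}$ decouples into $P^{\ox 4}\ox (P')^{\ox 4}$ while also tracking that the same permutation maps $(\ketbra{\psi}{\psi})^{\ox 4}$ to a PSD operator $\tilde\sigma$. Once this rearrangement is set up and a secondary regrouping is performed within $M_m$, Lemma~\ref{lem:P4>=0} (signed on the single-qubit side, unsigned factorwise on the $m$-qubit side) together with the standard fact $\tr(AB)\ge 0$ for $A,B\succeq 0$ closes the argument.
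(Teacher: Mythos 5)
Your proposal is correct and follows essentially the same route as the paper: lift the fourth power to $(\ketbra{\psi}{\psi})^{\ox 4}$, permute tensor factors so the signed Pauli sum factorizes as $M_1\ox M_m$ with $M_m\cong\bigl(\sum_{P'\in\cP_1}(P')^{\ox4}\bigr)^{\ox m}$, and invoke Lemma~\ref{lem:P4>=0} for both pieces. The only (inconsequential) difference is bookkeeping of the overall constant in the $R_2$ reformulation, where your factor of $16$ is in fact the correct one.
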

\begin{proof}
It is checked that
\begin{align}
    &R_2(\ketbra{\psi}{\psi}+(Z\ox\idop_{2^m})\ketbra{\psi}{\psi}(Z\ox\idop_{2^m}))-R_2(\ketbra{\psi}{\psi}-(Z\ox\idop_{2^m})\ketbra{\psi}{\psi}(Z\ox\idop_{2^m}))\\
    =&2\sum_{P\in\cP_{1}}(-1)^{\delta_{P\in\{X,Y\}}}\sum_{P'\in\cP_{m}}\bra\psi(P\ox P')\ket\psi^4\\
    =&2\sum_{P\in\cP_{1}}(-1)^{\delta_{P\in\{X,Y\}}}\sum_{P'\in\cP_{m}}\bra\psi^{\ox4}(P\ox P')^{\ox4}\ket\psi^{\ox4}\\
    =&2\bra\psi^{\ox4}\left(\sum_{P\in\cP_{1}}(-1)^{\delta_{P\in\{X,Y\}}}\sum_{P'\in\cP_{m}}(P\ox P')^{\ox4}\right)\ket\psi^{\ox4}.
\end{align}
Here $\sum_{P\in\cP_{1}}(-1)^{\delta_{P\in\{X,Y\}}}\sum_{P'\in\cP_{m}}(P\ox P')^{\ox4}$ is positive semi-definite because it is isomorphism to tensor
\begin{equation}
    \sum_{P\in\cP_{1}}(-1)^{\delta_{P\in\{X,Y\}}}P^{\ox4}\ox\sum_{P'\in\cP_{m}}P'^{\ox4}
    \simeq\sum_{P\in\cP_{1}}(-1)^{\delta_{P\in\{X,Y\}}}P^{\ox4}\ox\left(\sum_{P'\in\cP_{1}}P'^{\ox4}\right)^{\ox m},
\end{equation}
which is also positive semi-definite by Lemma \ref{lem:P4>=0}.
\end{proof}

\renewcommand\theproposition{\ref{thm:amor_magic_T}}
\setcounter{proposition}{\arabic{proposition}-1}
\begin{theorem}
For a single-qubit $T$ gate, $M^{\cA}_2(T) = 2-\log_2 3$.
\end{theorem}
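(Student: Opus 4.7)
The plan is to prove both bounds $M_2^{\cA}(T) \geq 2-\log_2 3$ and $M_2^{\cA}(T) \leq 2-\log_2 3$. The lower bound is achieved by the stabilizer input $\ket{+}$ (optionally tensored with any stabilizer ancilla). A direct computation of the single-qubit Pauli expectations of $T\ket{+} = (\ket{0} + e^{i\pi/4}\ket{1})/\sqrt{2}$ yields $\Xi_I = 1/2$, $\Xi_X = \Xi_Y = 1/4$, $\Xi_Z = 0$, so $\sum_{P} \Xi_P^2 = 3/8$ and hence $M_2(T\ket{+}) = -\log_2(3/8) - 1 = 2 - \log_2 3$, while $M_2(\ket{+}) = 0$.

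For the upper bound, the claim reduces to the single inequality
\begin{equation*}
\sum_{P\in\cP_{1+m}} \langle T\phi | P | T\phi\rangle^4 \;\geq\; \tfrac{3}{4} \sum_{P\in\cP_{1+m}} \langle \phi | P | \phi\rangle^4
\end{equation*}
for every $(1+m)$-qubit pure state $\ket{\phi}$, since this is equivalent to $M_2((T\ox\idop_{2^m})\ket{\phi}) - M_2(\ket{\phi}) \leq 2-\log_2 3$. I would split each Pauli as $P = Q\ox P'$ with $Q\in\{I,X,Y,Z\}$ and $P'\in\cP_m$, and set $\alpha_{P'},\beta_{P'},\gamma_{P'},\delta_{P'}$ to be the real expectations of $I\ox P'$, $Z\ox P'$, $X\ox P'$, $Y\ox P'$ on $\ket{\phi}$. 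Using the identities $T^\dagger X T = (X-Y)/\sqrt{2}$ and $T^\dagger Y T = (X+Y)/\sqrt{2}$, the corresponding expectations on $(T\ox\idop)\ket{\phi}$ become $\alpha_{P'}, \beta_{P'}, (\gamma_{P'}-\delta_{P'})/\sqrt{2}, (\gamma_{P'}+\delta_{P'})/\sqrt{2}$. Expanding fourth powers with $(a-b)^4+(a+b)^4 = 2(a^4+6a^2b^2+b^4)$ and collecting terms, the target inequality reduces to
\begin{equation*}
\sum_{P'\in\cP_m} \big( \alpha_{P'}^4 + \beta_{P'}^4 - \gamma_{P'}^4 - \delta_{P'}^4 \big) \;+\; 12 \sum_{P'\in\cP_m} \gamma_{P'}^2 \delta_{P'}^2 \;\geq\; 0,
\end{equation*}
in which the second sum is manifestly non-negative and the first sum is exactly the content of Lemma \ref{lem:IZgeXY}.

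The main obstacle is really the operator-positivity captured in Lemmas \ref{lem:P4>=0} and \ref{lem:IZgeXY}: the statement that the $I$ and $Z$ sectors dominate the $X$ and $Y$ sectors in the fourth-power Pauli sum is not evident from first principles, and its proof rests on the explicit rank-one decomposition of $\sum_{P\in\cP_1}(\pm 1)^{\delta_{P\in\{X,Y\}}}\,P^{\ox 4}$ given in Lemma \ref{lem:P4>=0}, combined with the tensor factorization $\sum_{P\in\cP_{m+1}}(\pm 1)^{\delta_{P_1\in\{X,Y\}}}P^{\ox 4}$ into a single-qubit part and an $m$-qubit part $(\sum_{P'}P'^{\ox 4})^{\ox m}$ that is also PSD by the same lemma. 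Once that positivity is in hand, the rest of the upper-bound proof is bookkeeping under $T$-conjugation, and the input $\ket{+}$ certifies that the constant $2-\log_2 3$ is tight.
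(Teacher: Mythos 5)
Your proposal is correct and follows essentially the same route as the paper's proof: the lower bound from the input $\ket{+}$, the reduction of the upper bound to $\sum_P\langle T\phi|P|T\phi\rangle^4\ge\tfrac34\sum_P\langle\phi|P|\phi\rangle^4$ via the conjugation identities $T^\dagger XT=(X-Y)/\sqrt2$, $T^\dagger YT=(X+Y)/\sqrt2$, and the final appeal to the positivity of $\sum_{P\in\cP_1}(-1)^{\delta_{P\in\{X,Y\}}}P^{\ox4}$ tensored with $\bigl(\sum_{P'}P'^{\ox4}\bigr)^{\ox m}$ exactly as in Lemmas~\ref{lem:P4>=0} and~\ref{lem:IZgeXY}. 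The bookkeeping, including the identity $(a-b)^4+(a+b)^4=2(a^4+6a^2b^2+b^4)$ and the discarded non-negative cross term $12\gamma_{P'}^2\delta_{P'}^2$, matches the paper's computation term for term.
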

\begin{proof}
First, by the definition of the amortized 2-stabilizer R\'enyi entropy, we have
\begin{equation}
    M^{\cA}_2(T)\ge M_2(T\ket{+})-  M_2(\ket{+})=-\log_2\frac32+\log_22=2-\log_23.
\end{equation}
Thus, it suffices to prove that $M^{\cA}_2(T)\le 2-\log_23$. For any $\ket{\psi}\in\cH_{{1+m}}$ and $P\in\cP_m$, denote
\begin{equation}
    p_{jP}=\bra{\psi}(\sigma_j\ox P)\ket{\psi}\in[-1,1].
\end{equation}
Then any pure state $\ket{\psi}$ can be expressed as
\begin{equation}
    \ketbra{\psi}{\psi} = 2^{-m-1}\sum_{j=0}^3\sum_{P\in\cP_m}p_{jP}\sigma_j\ox P,
\end{equation}
and 
\begin{equation}
    R_2(\ket{\psi}) = \sum_{P\in\cP_{1+m}} \bra{\psi}P\ket{\psi}^4=\sum_{j=0}^3\sum_{P\in\cP_m}p_{jP}^4.
\end{equation}
Noticing that
\begin{equation}
    T^\dag  X T = \frac{1}{\sqrt{2}}(X-Y),~
    T^\dag Y T = \frac{1}{\sqrt{2}}(X+Y),~
    T^\dag Z T = Z,
\end{equation}
we have
\begin{equation}
\begin{aligned}
    R_2((T\ox\idop_{2^m})\ket{\psi}) &= \sum_{P\in\cP_m}p_{0P}^4 + \left(\frac{p_{1P}+p_{2P}}{\sqrt{2}}\right)^4 + \left(\frac{p_{1P}-p_{2P}}{\sqrt{2}}\right)^4+p_{3P}^4\\
    &= \sum_{P\in\cP_m}p_{0P}^4 + \frac{1}{2}\left(p_{1P}^4+6p_{1P}^2p_{2P}^2+p_{2P}^4\right) + p_{3P}^4.
\end{aligned}
\end{equation}
Then we have
\begin{equation}
\begin{aligned}
    4R_2((T\ox\idop_{2^m})\ket{\psi})-3R_2(\ket{\psi})
    &= \sum_{P\in\cP_m}p_{0P}^4-p_{1P}^4+12p_{1P}^2p_{2P}^2-p_{2P}^4+p_{3P}^4\\
    &\ge \sum_{P\in\cP_m}p_{0P}^4-p_{1P}^4-p_{2P}^4+p_{3P}^4\\
    &= \sum_{P\in\cP_{1}}(-1)^{\delta_{P\in\{X,Y\}}}\sum_{P'\in\cP_{m}}\bra\psi(P\ox P')\ket\psi^4\\
    &\ge 0,
\end{aligned}
\end{equation}
where the last inequality is followed by Lemma \ref{lem:IZgeXY}.
As a result
\begin{align}
    M^{\cA}_2(T)
    =& \sup_{m\in \NN^+}\max_{\ket{\phi}\in \cH_{m+1}} M_{2}[(T\ox \idop_{2^m})\ket{\phi}] - M_{2}(\ket{\phi})\\
    =& \sup_{m\in \NN^+}\max_{\ket{\phi}\in \cH_{m+1}} -\log_2 R_2((T\ox \idop_{2^m})\ket{\phi})+\log_2 R_2(\ket{\phi})\\
    \le& \sup_{m\in \NN^+}\max_{\ket{\phi}\in \cH_{m+1}} -\log_2 \frac34R_2(\ket{\phi})+\log_2 R_2(\ket{\phi})\\
    =& \sup_{m\in \NN^+}\max_{\ket{\phi}\in \cH_{m+1}} -\log_2 \frac34\\
    =& 2-\log_23.
\end{align}
Hence, we complete the proof.
\end{proof}


\renewcommand{\theproposition}{S\arabic{proposition}}

\section{Appendix D: Proof of $M^{\cA}_2(CCZ) = 5-\log_211$}\label{app:amor_magic_CCZ}

\begin{lemma}\label{lem:P4096>=0}
The matrix
\begin{equation}
    \sum_{j,k,l=0}^3\left(8\delta_{j\in\{0,3\}}\delta_{k\in\{0,3\}}\delta_{l\in\{0,3\}}-1\right)(\sigma_j\ox\sigma_k\ox\sigma_l)^{\ox4}=8\sum_{P,Q,R\in\{I,Z\}}(P\ox Q\ox R)^{\ox4}-\sum_{P'\in\cP_{3}}P'^{\ox4}
\end{equation}
is positive semi-definite.
\end{lemma}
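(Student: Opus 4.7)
The plan is to reduce this twelve-qubit positive semi-definiteness claim to the already-proven Lemma \ref{lem:P4>=0} by exploiting a permutation of tensor factors together with the fact that PSD ordering is preserved under tensor cubes.

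First, I would apply a permutation of the twelve tensor factors; since such a permutation is unitary conjugation, it preserves positive semi-definiteness. Indexing the twelve qubit positions by $(a,b) \in \{1,2,3,4\} \times \{1,2,3\}$, the operator $(\sigma_j \otimes \sigma_k \otimes \sigma_l)^{\otimes 4}$ places the single-qubit Pauli indexed by $b$ at every position with second coordinate $b$, whereas $\sigma_j^{\otimes 4} \otimes \sigma_k^{\otimes 4} \otimes \sigma_l^{\otimes 4}$ places it at every position with first coordinate $b$. Swapping the two coordinates converts one into the other. Summing over the relevant single-qubit Paulis then gives the unitary equivalences
\begin{equation*}
\sum_{P' \in \cP_3} P'^{\otimes 4} \;\cong\; B^{\otimes 3}, \qquad \sum_{P,Q,R \in \{I,Z\}} (P \otimes Q \otimes R)^{\otimes 4} \;\cong\; A^{\otimes 3},
\end{equation*}
where $B := \sum_{P \in \cP_1} P^{\otimes 4}$ and $A := I^{\otimes 4} + Z^{\otimes 4}$. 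Since $8 = 2^3$, the target reduces to showing $(2A)^{\otimes 3} \succeq B^{\otimes 3}$.

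Second, I would establish the key one-qubit-level comparison $2A \succeq B \succeq 0$. The positivity of $B$ is the first statement of Lemma \ref{lem:P4>=0}, and $2A = 2 I^{\otimes 4} + 2 Z^{\otimes 4} \succeq 0$ is immediate since $Z^{\otimes 4}$ has spectrum $\{\pm 1\}$. The difference
\begin{equation*}
2A - B \;=\; I^{\otimes 4} + Z^{\otimes 4} - X^{\otimes 4} - Y^{\otimes 4} \;=\; \sum_{P \in \cP_1} (-1)^{\delta_{P \in \{X,Y\}}} P^{\otimes 4}
\end{equation*}
is PSD by the second statement of Lemma \ref{lem:P4>=0}, so all three of $2A$, $B$, and $2A-B$ are PSD.

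Finally, I would close the argument with the standard monotonicity of tensor cubes under PSD ordering via the telescoping identity
\begin{equation*}
X^{\otimes 3} - Y^{\otimes 3} \;=\; (X-Y)\otimes X \otimes X \,+\, Y \otimes (X-Y) \otimes X \,+\, Y \otimes Y \otimes (X-Y),
\end{equation*}
applied with $X = 2A$ and $Y = B$; each summand on the right is a tensor product of PSD operators and is therefore PSD, whence $(2A)^{\otimes 3} \succeq B^{\otimes 3}$ and the claimed matrix is PSD after undoing the permutation. The main conceptual hurdle is the regrouping in the first step: once one recognizes that the twelve-qubit matrix is (unitarily conjugate to) a tensor cube of a four-qubit object, the remainder of the proof is a clean invocation of Lemma \ref{lem:P4>=0} together with elementary PSD manipulations.
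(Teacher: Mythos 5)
Your proof is correct. Note that the paper states this lemma without giving any proof at all (it passes directly to Lemma~\ref{lem:IZgeXY4096}), so your argument actually supplies the missing justification rather than duplicating one. All three ingredients check out: the coordinate-swap permutation of the twelve factors is a fixed unitary conjugation sending $(\sigma_j\ox\sigma_k\ox\sigma_l)^{\ox4}$ to $\sigma_j^{\ox4}\ox\sigma_k^{\ox4}\ox\sigma_l^{\ox4}$ uniformly in $(j,k,l)$, so both sums factor as the tensor cubes $B^{\ox3}$ and $A^{\ox3}$ you claim; the identity $2A-B=\sum_{P\in\cP_1}(-1)^{\delta_{P\in\{X,Y\}}}P^{\ox4}$ together with Lemma~\ref{lem:P4>=0} gives $2A\succeq B\succeq0$; and the telescoping identity expresses $(2A)^{\ox3}-B^{\ox3}$ as a sum of tensor products of PSD operators, which needs no commutativity assumption. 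The regrouping step is also consonant with how the paper handles the analogous isomorphism inside the proof of Lemma~\ref{lem:IZgeXY4096}. One could alternatively prove the lemma by direct diagonalization in the style of the paper's proof of Lemma~\ref{lem:P4>=0} (here $2A$ is four times the projector onto the even-parity subspace of four qubits and $B$ is four times a subprojector of it, so $2A\succeq B$ is immediate), but your route via the two statements of Lemma~\ref{lem:P4>=0} is cleaner and reuses what is already established.
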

Then we have
\begin{lemma}\label{lem:IZgeXY4096}
For any $\ket{\psi}\in\cH_{{m+3}}$,
\begin{equation}
    \sum_{j,k,l=0}^3\left(8\delta_{j\in\{0,3\}}\delta_{k\in\{0,3\}}\delta_{l\in\{0,3\}}-1\right)\sum_{P\in\cP_m}\bra\psi(\sigma_j\ox\sigma_k\ox\sigma_l\ox P)\ket\psi^4\ge0.
\end{equation}
\end{lemma}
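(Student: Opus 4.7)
The plan is to mirror the proof of Lemma~\ref{lem:IZgeXY}: pass from fourth powers of matrix elements to a single quadratic form on the four-fold tensor power $\cH_{m+3}^{\ox 4}$ via the identity $\bra{\psi}A\ket{\psi}^4 = \bra{\psi}^{\ox 4} A^{\ox 4}\ket{\psi}^{\ox 4}$ for Hermitian $A$, and then exhibit the resulting operator as manifestly positive semi-definite by exposing a tensor-product structure.

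Applying the identity to each $A = \sigma_j\ox\sigma_k\ox\sigma_l\ox P$ and pulling the sums inside rewrites the left-hand side as $\bra{\psi}^{\ox 4} M \ket{\psi}^{\ox 4}$, where
\begin{equation*}
M = \sum_{j,k,l=0}^3 \bigl(8\delta_{j\in\{0,3\}}\delta_{k\in\{0,3\}}\delta_{l\in\{0,3\}}-1\bigr) \sum_{P\in\cP_m} (\sigma_j\ox\sigma_k\ox\sigma_l\ox P)^{\ox 4}.
\end{equation*}
The operator $M$ acts on $(\cH_3\ox\cH_m)^{\ox 4}$, and under the standard tensor-factor permutation that regroups the four copies of $\cH_3$ together and the four copies of $\cH_m$ together, it is unitarily isomorphic to a factorized form,
\begin{equation*}
M \;\cong\; \Bigl[\sum_{j,k,l=0}^3 \bigl(8\delta_{j\in\{0,3\}}\delta_{k\in\{0,3\}}\delta_{l\in\{0,3\}}-1\bigr) (\sigma_j\ox\sigma_k\ox\sigma_l)^{\ox 4}\Bigr] \ox \Bigl[\sum_{P\in\cP_m} P^{\ox 4}\Bigr],
\end{equation*}
because the coefficient carrying the $\delta$-factors does not depend on $P$, so the double sum separates cleanly into a three-qubit part and an $m$-qubit part.

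Each factor is then positive semi-definite: the three-qubit factor is exactly the matrix of Lemma~\ref{lem:P4096>=0}, while for the $m$-qubit factor the same tensor-factor-permutation trick applied to $P = \sigma_{h_1}\ox\cdots\ox\sigma_{h_m}$ gives $\sum_{P\in\cP_m} P^{\ox 4} \cong \bigl(\sum_{P'\in\cP_1} (P')^{\ox 4}\bigr)^{\ox m}$, which is a tensor power of the PSD matrix from Lemma~\ref{lem:P4>=0} and hence PSD. The tensor product of two PSD operators is PSD, so $M$ is PSD and $\bra{\psi}^{\ox 4} M \ket{\psi}^{\ox 4} \ge 0$, which is the desired inequality. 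The main obstacle, from the standpoint of the lemma at hand, is really delegated to Lemma~\ref{lem:P4096>=0}: one must check that the eight "diagonal" configurations $(j,k,l)\in\{0,3\}^3$, which enter with effective weight $+7$, genuinely dominate the remaining $56$ configurations that enter with weight $-1$, presumably by exhibiting an explicit sum of rank-one projectors onto symmetric/antisymmetric combinations of computational basis strings on $\{0,1\}^{12}$, generalizing the decomposition that proves Lemma~\ref{lem:P4>=0}. Once that combinatorial identity is in hand, the proof of Lemma~\ref{lem:IZgeXY4096} is essentially a bookkeeping exercise in regrouping tensor factors.
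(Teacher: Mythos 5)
Your proposal is correct and follows essentially the same route as the paper: lift the fourth powers to $\ket{\psi}^{\ox 4}$, regroup tensor factors so the operator factorizes into the three-qubit matrix of Lemma~\ref{lem:P4096>=0} tensored with $\bigl(\sum_{P'\in\cP_1}(P')^{\ox 4}\bigr)^{\ox m}$, and conclude positivity from Lemmas~\ref{lem:P4096>=0} and~\ref{lem:P4>=0}. You correctly identify that the real content is deferred to Lemma~\ref{lem:P4096>=0}, which the paper likewise invokes without spelling out its verification.
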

\begin{proof}
It is checked that
\begin{align}
    &\sum_{j,k,l=0}^3\left(8\delta_{j\in\{0,3\}}\delta_{k\in\{0,3\}}\delta_{l\in\{0,3\}}-1\right)\sum_{P\in\cP_m}\bra\psi(\sigma_j\ox\sigma_k\ox\sigma_l\ox P)\ket\psi^4\\
    =&\sum_{j,k,l=0}^3\left(8\delta_{j\in\{0,3\}}\delta_{k\in\{0,3\}}\delta_{l\in\{0,3\}}-1\right)\sum_{P\in\cP_m}\bra\psi^{\ox4}(\sigma_j\ox\sigma_k\ox\sigma_l\ox P)^{\ox4}\ket\psi^{\ox4}\\
    =&\bra\psi^{\ox4}\left(\sum_{j,k,l=0}^3\left(8\delta_{j\in\{0,3\}}\delta_{k\in\{0,3\}}\delta_{l\in\{0,3\}}-1\right)\sum_{P\in\cP_m}(\sigma_j\ox\sigma_k\ox\sigma_l\ox P)^{\ox4}\right)\ket\psi^{\ox4}.
\end{align}
Here $\sum_{j,k,l=0}^3\left(8\delta_{j\in\{0,3\}}\delta_{k\in\{0,3\}}\delta_{l\in\{0,3\}}-1\right)\sum_{P\in\cP_m}(\sigma_j\ox\sigma_k\ox\sigma_l\ox P)^{\ox4}$ is positive semi-definite because it is isomorphism to tensor
\begin{align}
    &\sum_{j,k,l=0}^3\left(8\delta_{j\in\{0,3\}}\delta_{k\in\{0,3\}}\delta_{l\in\{0,3\}}-1\right)(\sigma_j\ox\sigma_k\ox\sigma_l)^{\ox4}\ox\sum_{P\in\cP_m}P^{\ox4}\\
    \simeq&\sum_{j,k,l=0}^3\left(8\delta_{j\in\{0,3\}}\delta_{k\in\{0,3\}}\delta_{l\in\{0,3\}}-1\right)(\sigma_j\ox\sigma_k\ox\sigma_l)^{\ox4}\ox\left(\sum_{P\in\cP_1}P^{\ox4}\right)^{\ox m},
\end{align}
which is also positive semi-definite by Lemma \ref{lem:P4>=0} and Lemma \ref{lem:P4096>=0}.
\end{proof}

\begin{theorem}
For a triple-qubit gate $CCZ$, $M^{\cA}_2(CCZ) = 5-\log_2 11$.
\end{theorem}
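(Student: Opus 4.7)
The plan mirrors the structure of the proof of Theorem~\ref{thm:amor_magic_T} for the $T$ gate: establish the lower bound by an explicit stabilizer input, and the upper bound via the multiplicative inequality
\begin{equation*}
32\,R_2((CCZ\ox\idop_{2^m})\ket\phi)\ \ge\ 11\,R_2(\ket\phi),
\end{equation*}
which upon taking logarithms yields $M^{\cA}_2(CCZ)\le 5-\log_2 11$. The role played in the $T$-gate case by Lemma~\ref{lem:IZgeXY} will here be played by Lemma~\ref{lem:IZgeXY4096}.

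For the lower bound, I would take the stabilizer input $\ket{+}^{\ox 3}$, so that $M^{\cA}_2(CCZ)\ge M_2(CCZ\ket{+}^{\ox 3})-M_2(\ket{+}^{\ox 3})=M_2(CCZ\ket{+}^{\ox 3})$. It then suffices to compute the Pauli expectations $\bra{+^{\ox 3}}CCZ\,P\,CCZ\ket{+^{\ox 3}}$ for every three-qubit Pauli $P$, using the Heisenberg identities $CCZ\,P\,CCZ=P$ for $P\in\{I,Z\}^{\ox 3}$, $CCZ\,X_i\,CCZ=X_i\,CZ_{jk}$, $CCZ\,X_iX_j\,CCZ=X_iX_jZ_k\,CZ_{ik}\,CZ_{jk}$, and their $Y$-analogues obtained from $Y=iXZ$. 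A direct enumeration of the 64 Paulis gives $R_2(CCZ\ket{+}^{\ox 3})=11/4$, hence $M_2(CCZ\ket{+}^{\ox 3})=3-\log_2(11/4)=5-\log_2 11$.

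For the upper bound, parameterize any $\ket\phi\in\cH_{m+3}$ by its Pauli coefficients $p_{jklP}=\bra\phi(\sigma_j\ox\sigma_k\ox\sigma_l\ox P)\ket\phi$, so that $R_2(\ket\phi)=\sum_{j,k,l,P}p_{jklP}^4$. Then expand $R_2((CCZ\ox\idop_{2^m})\ket\phi)$ in terms of the $p_{jklP}$ via the Heisenberg action of $CCZ$: conjugation is trivial on $\{I,Z\}^{\ox 3}$, and on a Pauli with $\{X,Y\}$-support $A\subseteq\{1,2,3\}$, $|A|\ge 1$, it produces a signed sum of Paulis scaled by $(1/2)^{|A|}$ arising from the $|A|$ $CZ$-factors in the conjugation formula. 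I would then aim to rewrite $32\,R_2((CCZ\ox\idop_{2^m})\ket\phi)-11\,R_2(\ket\phi)$ as (i) a collection of manifestly non-negative cross terms $c\,p_\cdot^2\,p_\cdot^2$ produced by expanding the fourth powers of these signed sums (the direct analogue of the $12\,p_{1P}^2p_{2P}^2$ term in the $T$-gate proof), plus (ii) a residual of the form $\sum_{j,k,l,P}\bigl(8\delta_{j\in\{0,3\}}\delta_{k\in\{0,3\}}\delta_{l\in\{0,3\}}-1\bigr)p_{jklP}^4$ whose non-negativity is precisely Lemma~\ref{lem:IZgeXY4096}.

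The hard part will be verifying the algebraic decomposition in the previous paragraph. Unlike the $T$-gate case, where only the two off-diagonal Paulis $X$ and $Y$ mix on a single qubit, the 56 non-diagonal three-qubit Paulis are reorganized by $CCZ$ conjugation into multi-term signed sums whose sizes, signs, and particular Pauli supports depend on $|A|$ and on the $X$-versus-$Y$ content within $A$. One must expand the fourth power of each such sum, identify and drop the non-negative cross terms, and check that the remaining pure fourth-power contributions combine with the diagonal ones to produce exactly the $8\delta-1$ pattern required by Lemma~\ref{lem:IZgeXY4096}. The specific constants $32=2^5$ and $11$ are expected to emerge from summing the $(1/2)^{|A|}$ scalings and the Pauli-coincidence patterns across subsets $A\subseteq\{1,2,3\}$; while the computation is entirely mechanical, the bookkeeping and sign-tracking are the delicate part.
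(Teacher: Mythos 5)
Your proposal follows essentially the same route as the paper's proof: the lower bound from the stabilizer input $\ket{+}^{\ox 3}$ giving $M_2(CCZ\ket{+}^{\ox3})=5-\log_2 11$, and the upper bound from the inequality $32\,R_2((CCZ\ox\idop_{2^m})\ket\phi)\ge 11\,R_2(\ket\phi)$, obtained by expanding $R_2$ of the output in the Pauli coefficients $p_{jklP}$, discarding manifestly non-negative cross terms, and reducing the residual to the positive semi-definiteness statement of Lemma~\ref{lem:IZgeXY4096} with exactly the $8\delta_{j\in\{0,3\}}\delta_{k\in\{0,3\}}\delta_{l\in\{0,3\}}-1$ weight pattern. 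The only difference is that you defer the explicit enumeration of the 64-Pauli expansion, which the paper carries out in full; your plan for that bookkeeping is correct as stated.
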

\begin{proof}
First, by the definition of the amortized 2-stabilizer R\'enyi entropy, we have
\begin{equation}
    M^{\cA}_2(CCZ)\ge M_2(CCZ\ket{+}^{\ox3})-  M_2(\ket{+}^{\ox3})=-\log_2\frac{11}{4}+\log_28=5-\log_211.
\end{equation}
Thus, it suffices to prove that $M^{\cA}_2(CCZ)\le 5-\log_211$. For any $\ket{\psi}\in\cH_{{3+m}}$, $P\in\cP_m$, denote
\begin{equation}
    p_{jklP}=\bra{\psi}(\sigma_j\ox\sigma_k\ox\sigma_l\ox P)\ket{\psi}\in[-1,1].
\end{equation}
It follows that any pure state $\ket{\psi}$ can be expressed as
\begin{equation}
    \ketbra{\psi}{\psi} = 2^{-m-3}\sum_{j,k,l=0}^3\sum_{P\in\cP_m}p_{jklP}\sigma_j\ox\sigma_k\ox\sigma_l\ox P,
\end{equation}
and 
\begin{equation}
    R_2(\ket{\psi}) = \sum_{P'\in\cP_{3+m}} \bra{\psi}P\ket{\psi}^4=\sum_{j,k,l=0}^3\sum_{P\in\cP_m}p_{jklP}^4.
\end{equation}
It is checked that
\begin{equation}
\begin{aligned}
    &R_2((CCZ\ox\idop_{2^m})\ket{\psi}) \\
    =& \sum_{P\in\cP_m}\left(\frac{3}{4}\sum_{j,k,l\in\{0,3\}}p_{jklP}^4 +\frac{1}{4}\sum_{j,k,l=0}^3p_{jklP}^4+\right.\\
    &\qquad\ \frac{3}{2}\left(p_{{001P}}^2 p_{{331P}}^2+p_{{002P}}^2 p_{{332P}}^2+p_{{010P}}^2 p_{{313P}}^2+p_{{011P}}^2 p_{{322P}}^2+p_{{012P}}^2 p_{{321P}}^2+p_{{013P}}^2 p_{{310P}}^2+p_{{020P}}^2 p_{{323P}}^2+\right.\\ 
    &\qquad\qquad p_{{021P}}^2p_{{312P}}^2+p_{{022P}}^2 p_{{311P}}^2+p_{{023P}}^2 p_{{320P}}^2+p_{{031P}}^2 p_{{301P}}^2+p_{{032P}}^2 p_{{302P}}^2+p_{{100P}}^2 p_{{133P}}^2+p_{{101P}}^2 p_{{232P}}^2+\\
    &\qquad\qquad p_{{102P}}^2p_{{231P}}^2+p_{{103P}}^2 p_{{130P}}^2+p_{{110P}}^2 p_{{223P}}^2+p_{{111P}}^2 p_{{221P}}^2+p_{{112P}}^2 p_{{222P}}^2+p_{{113P}}^2 p_{{220P}}^2+p_{{120P}}^2 p_{{213P}}^2+\\
    &\qquad\qquad \left.p_{{121P}}^2p_{{211P}}^2+p_{{122P}}^2 p_{{212P}}^2+p_{{123P}}^2 p_{{210P}}^2+p_{{131P}}^2 p_{{202P}}^2+p_{{132P}}^2 p_{{201P}}^2+p_{{200P}}^2 p_{{233P}}^2+p_{{203P}}^2 p_{{230P}}^2\right)+\\
    &\qquad\ \frac{3}{2}\left(
    \left(p_{{001P}} p_{{301P}}-p_{{031P}} p_{{331P}}\right)^2+\left(p_{{001P}} p_{{031P}}-p_{{301P}} p_{{331P}}\right)^2+\left(p_{{002P}} p_{{302P}}-p_{{032P}} p_{{332P}}\right)^2+\right.\\
    &\qquad\qquad \left(p_{{002P}} p_{{032P}}-p_{{302P}} p_{{332P}}\right)^2+\left(p_{{010P}} p_{{310P}}-p_{{013P}} p_{{313P}}\right)^2+\left(p_{{010P}} p_{{013P}}-p_{{310P}} p_{{313P}}\right)^2+\\
    &\qquad\qquad\left(p_{{011P}} p_{{311P}}-p_{{022P}} p_{{322P}}\right)^2+\left(p_{{011P}} p_{{022P}}-p_{{311P}} p_{{322P}}\right)^2+\left(p_{{012P}} p_{{312P}}-p_{{021P}} p_{{321P}}\right)^2+\\
    &\qquad\qquad\left(p_{{012P}} p_{{021P}}-p_{{312P}} p_{{321P}}\right)^2+\left(p_{{020P}} p_{{320P}}-p_{{023P}} p_{{323P}}\right)^2+\left(p_{{020P}} p_{{023P}}-p_{{320P}} p_{{323P}}\right)^2+\\
    &\qquad\qquad\left(p_{{100P}} p_{{130P}}-p_{{103P}} p_{{133P}}\right)^2+\left(p_{{100P}} p_{{103P}}-p_{{130P}} p_{{133P}}\right)^2+\left(p_{{101P}} p_{{202P}}-p_{{131P}} p_{{232P}}\right)^2+\\
    &\qquad\qquad\left(p_{{101P}} p_{{131P}}-p_{{202P}} p_{{232P}}\right)^2+\left(p_{{102P}} p_{{201P}}-p_{{132P}} p_{{231P}}\right)^2+\left(p_{{102P}} p_{{132P}}-p_{{201P}} p_{{231P}}\right)^2+\\
    &\qquad\qquad\left(p_{{110P}} p_{{220P}}-p_{{113P}} p_{{223P}}\right)^2+\left(p_{{110P}} p_{{113P}}-p_{{220P}} p_{{223P}}\right)^2+\left(p_{{111P}} p_{{212P}}+p_{{122P}} p_{{221P}}\right)^2+\\
    &\qquad\qquad\left(p_{{111P}} p_{{122P}}+p_{{212P}} p_{{221P}}\right)^2+\left(p_{{112P}} p_{{211P}}+p_{{121P}} p_{{222P}}\right)^2+\left(p_{{112P}} p_{{121P}}+p_{{211P}} p_{{222P}}\right)^2+\\
    &\qquad\qquad\left(p_{{120P}} p_{{210P}}-p_{{123P}} p_{{213P}}\right)^2+\left(p_{{120P}} p_{{123P}}-p_{{210P}} p_{{213P}}\right)^2+\left(p_{{200P}} p_{{230P}}-p_{{203P}} p_{{233P}}\right)^2+\\
    &\qquad\qquad\left.\left.\left(p_{{200P}} p_{{203P}}-p_{{230P}} p_{{233P}}\right)^2\right)\right)\\
    \ge&\sum_{P\in\cP_m}\frac{3}{4}\sum_{j,k,l\in\{0,3\}}p_{jklP}^4 +\frac{1}{4}\sum_{j,k,l=0}^3p_{jklP}^4
\end{aligned}
\end{equation}
Then we have
\begin{equation}
\begin{aligned}
    &32R_2((CCZ\ox\idop_{2^m})\ket{\psi})-11R_2(\ket{\psi})\\
    \ge&\left(\sum_{P\in\cP_m}24\sum_{j,k,l\in\{0,3\}}p_{jklP}^4 +8\sum_{j,k,l=0}^3p_{jklP}^4\right)-11\sum_{P\in\cP_m}\sum_{j,k,l=0}^3p_{jklP}^4)\\
    =&3\sum_{P\in\cP_m}8\sum_{j,k,l\in\{0,3\}}p_{jklP}^4 -\sum_{j,k,l=0}^3p_{jklP}^4\\
    =&3\sum_{j,k,l\in\{0,3\}}\left(8\delta_{j\in\{0,3\}}\delta_{k\in\{0,3\}}\delta_{l\in\{0,3\}}-1\right)\sum_{P\in\cP_m}\bra\psi(\sigma_j\ox\sigma_k\ox\sigma_l\ox P)\ket\psi^4\\
    \ge& 0,
\end{aligned}
\end{equation}
where the last inequality is followed by Lemma \ref{lem:IZgeXY4096}.
As a result
\begin{align}
    M^{\cA}_2(CCZ)
    =& \sup_{m\in \NN^+}\max_{\ket{\phi}\in \cH_{m+3}} M_{2}[(CCZ\ox \idop_{2^m})\ket{\phi}] - M_{2}(\ket{\phi})\\
    =& \sup_{m\in \NN^+}\max_{\ket{\phi}\in \cH_{m+3}} -\log_2 R_2((CCZ\ox \idop_{2^m})\ket{\phi})+\log_2 R_2(\ket{\phi})\\
    \le& \sup_{m\in \NN^+}\max_{\ket{\phi}\in \cH_{m+3}} -\log_2 \frac{11}{32}R_2(\ket{\phi})+\log_2 R_2(\ket{\phi})\\
    =& \sup_{m\in \NN^+}\max_{\ket{\phi}\in \cH_{m+3}} -\log_2 \frac{11}{32}\\
    =& 5-\log_211.
\end{align}
Hence, we complete the proof.
\end{proof}

\end{document}